\pgfplotsset{compat=1.10}
\newcommand{\R}{\mathbb{R}}
\newcommand{\CC}{\mathbb{C}}
\newcommand{\N}{\mathbb{N}}
\newcommand{\M}{\mathbb{M}}
\newcommand{\id}{\mathbbm{1}}
\newcommand{\vx}{\mathbf{x}}
\newcommand{\vy}{\mathbf{y}}
\newcommand{\be}{\begin{equation}}
\newcommand{\ee}{\end{equation}}
\newcommand{\ret}{{\rm ret}}
\newcommand{\free}{{\rm free}}
\newcommand{\Dirac}{{\rm Dirac}}
\newcommand{\vertiii}[1]{{\left\vert\kern-0.25ex\left\vert\kern-0.25ex\left\vert #1 
    \right\vert\kern-0.25ex\right\vert\kern-0.25ex\right\vert}}
\newcommand{\Banach}{\mathscr{B}}
\DeclareMathOperator*{\esssup}{ess \, sup}
\newtheorem{theorem}{Theorem}[section]
\newtheorem{lemma}[theorem]{Lemma}
\newenvironment{proof}[1][Proof:]{\begin{trivlist}
\item[\hskip \labelsep {\bfseries #1}]}{\end{trivlist}}
\newcommand{\qed}{\hfill\ensuremath{\Box}}
\title{Existence of relativistic dynamics for two directly interacting Dirac particles in 1+3 dimensions}
\author{
	Matthias Lienert\thanks{Institute Cyber Defense (CODE),
     Bundeswehr University Munich,
     Carl-Wery-Str. 22, 81739 Munich.
     E-mail: matthias.lienert@unibw.de} \ and
Markus N\"oth\thanks{Mathematisches Institut, Ludwig-Maximilians-Universit\"at,
	Theresienstr. 39, 80333 M\"unchen, Germany. E-mail: noeth@math.lmu.de}
}
\date{March 18, 2021}
\begin{document}

\maketitle

\begin{abstract}
\noindent  Here we prove the existence and uniqueness of solutions of a class of integral equations describing two Dirac particles in 1+3 dimensions with direct interactions. This class of integral equations arises naturally as a relativistic generalization of the integral version of the two-particle Schr\"odinger equation. Crucial use of a multi-time wave function $\psi(x_1,x_2)$ with $x_1,x_2 \in \R^4$ is made. A central feature is the time delay of the interaction. Our main result is an existence and uniqueness theorem for a Minkowski half space, meaning that Minkowski spacetime is cut off before $t=0$. We furthermore show that the solutions are determined by Cauchy data at the initial time; however, no Cauchy problem is admissible at other times. A second result is to extend the first one to particular FLRW spacetimes with a Big Bang singularity, using the conformal invariance of the Dirac equation in the massless case. This shows that the cutoff at $t=0$ can arise naturally and be fully compatible with relativity. We thus obtain a class of interacting, manifestly covariant and rigorous models in 1+3 dimensions.
\\[0.1cm]

    \noindent \textbf{Keywords:} relativistic quantum theory, interaction with time delay, multi-time wave functions, Dirac equation, Volterra-type integral equation, non-Markovian dynamics.
\end{abstract}

\section{Introduction}

The Dirac equation is perhaps the most important equation in relativistic quantum theory, thus it may seem surprising that no completely satisfactory mathematical mechanism of interaction has been found for it. Usually, interactions between many particles are implemented in one of the following ways: (a) adding a potential to the free Hamiltonian, (b) using a second quantized electromagnetic field which mediates the interaction. Both approaches face difficulties. Approach (a) corresponds to postulating the equation
\be
	i \partial_t \varphi(t,\vx_1,\vx_2) = \left(H_1^\Dirac + H_2^\Dirac + V(t,\vx_1,\vx_2) \right) \varphi(t,\vx_1,\vx_2),
\label{eq:singletimedirac}
\ee
where $V$ is a potential and $H_k^\Dirac$ the Dirac Hamiltonian acting on the variables of the $k$-th particle. Under appropriate circumstances, it is clear that \eqref{eq:singletimedirac} defines an interacting dynamics  (see e.g.\ \cite{dirk_martin_2018} and references therein). 
However, \eqref{eq:singletimedirac} is not Lorentz invariant.

Approach (b), on the other hand, easily leads to a Lorentz invariant dynamics. However, one encounters difficulties with ultraviolet divergences. These difficulties have led to the situation that, great efforts notwithstanding, it has so far only been possible to rigorously define a Lorentz invariant dynamics for toy models in 1+1 and 1+2 spacetime dimensions (see e.g.\ \cite{thirring_model,glimm_jaffe,jaffe_cft}). In 1+3 dimensions, it has been an open problem to prove the existence of the dynamics for any interacting and completely relativistic model.

In this paper, we pursue a new approach to defining interacting dynamics, neither via potentials nor via second quantized fields, but rather through \textit{direct interactions with time delay}, and prove the existence of dynamics for the simple case of two Dirac particles in 1+3 dimensions. The key innovation is to make use of \textit{multi-time wave functions}. This concept goes back to Dirac \cite{dirac_32}, played an important role in the works of Tomonaga \cite{tomonaga} and Schwinger \cite{schwinger}, has been studied by different authors over the years \cite{guenther_1952,marx_1974,schweber,drozvincent_1981,sazdjian_2bd,2bdem} and has recently undergone considerable developments \cite{nogo_potentials,qftmultitime,multitime_pair_creation,1d_model,nt_model,2bd_current_cons,deckert_nickel_2016,lpt_2017b,generalized_born,ibc_model,phd_nickel}; an overview can be found in \cite{dice_paper}. For two Dirac particles in Minkowski spacetime $\M$, a multi-time wave function is a map
\be
	\psi : \M \times \M \rightarrow \CC^4 \otimes \CC^4\cong \CC^{16},~~~(x_1,x_2) \mapsto \psi(x_1,x_2).
\label{eq:multitimewavefn}
\ee	
$\psi$ can be considered a generalization of the single-time wave function $\varphi$ in the Schr\"odinger picture, as in Eq.\ \eqref{eq:singletimedirac}. The relation of $\psi$ to $\varphi$ is straightforwardly given by
\be
	\varphi(t,\vx_1,\vx_2) = \psi((t,\vx_1),(t,\vx_2)).
	\label{eq:singlemulti}
\ee
Contrary to the single-time wave function $\varphi$ (which refers to a frame), $\psi$ is a manifestly covariant object. Under a Poincar\'{e} transformation $(a,\Lambda)$, $\psi$ transforms as
\be
	\psi'(x_1,x_2) = S[\Lambda]\otimes S[\Lambda] \psi(\Lambda^{-1}(x_1-a), \Lambda^{-1}(x_2-a)),
\ee
where $S[\Lambda]$ are the matrices appearing in the spinor representation of the Lorentz group.

For the present purposes, it is crucial that $\psi$ is defined on general space-time configurations $(x_1,x_2)\in \M\times \M$, not only on equal-time configurations as $\varphi$. By relating configurations $(x_1,x_2)$ with different time coordinates $x_1^0 \neq x_2^0$ one can express \textit{interactions with a time delay}. It has been pointed out in \cite{direct_interaction_quantum} that in this way, \textit{direct relativistic interactions} (unmediated by fields) can be expressed at the quantum level. 
In particular, it becomes possible to formulate a quantum analog of direct interactions along light cones, such as in the Wheeler-Feynman formulation of classical electrodynamics \cite{wf1,wf2}, using values of $\psi(x_1,x_2)$ with $(x_1-x_2)_\mu (x_1-x_2)^\mu = 0$. This is not directly feasible using just $\varphi$. We thus note that \textit{new kinds of interacting quantum dynamics can be defined using a multi-time wave function}.

An interesting class of such dynamics has recently been suggested in \cite{direct_interaction_quantum} and has been subsequently analyzed rigorously in \cite{mtve,int_eq_curved}: \textit{multi-time integral equations}. 
But why study integral equations instead of PDEs?  To answer this question, note that the initial value problem $\varphi(0,\vx_1,\vx_2) = \psi_0(\vx_1,\vx_2)$ of the single-time Schr\"odinger equation \eqref{eq:singletimedirac} can equivalently be formulated as the following integral equation:
\begin{align}
	\varphi(t,\vx_1,\vx_2) = \varphi^\free(t,\vx_1,\vx_2) + \int_0^\infty \!\! dt' \int d^3 \vx_1' \, d^3 \vx_2' &\, \gamma_1^0 S_1^\ret(t-t',\vx_1-\vx_1') \gamma_2^0 S_2^\ret(t-t',\vx_2-\vx_2') \nonumber\\
&\times V(t',\vx_1',\vx_2')\varphi(t',\vx_1',\vx_2'),
	\label{eq:singletimediracint}
\end{align}
where $\varphi^\free$ is the solution of the same initial value problem of the free equation (\eqref{eq:singletimedirac} with $V=0$) and $S_k^\ret$ is the retarded Green's function of the $k$-th Dirac operator.

Now, contrary to the PDE \eqref{eq:singletimedirac}, the integral equation \eqref{eq:singletimediracint} possesses a straightforward manifestly covariant generalization in terms of a multi-time wave function, namely:
\be
	\psi(x_1,x_2) = \psi^\free(x_1,x_2) + \int d^4 x_1' \, d^4 x_2' \, S_1(x_1-x_2') S_2(x_2-x_2') K(x_1',x_2') \psi(x_1',x_2'),
	\label{eq:inteq}
\ee
where $\psi^\free$ is a solution of the equations $D_1 \psi^\free = 0$, $D_2\psi^\free = 0$, $D_k = (i \gamma^\mu_k \partial_{k,\mu} - m_k)$ and $S_1, S_2$ are (retarded or other) Green's functions of $D_1, D_2$, respectively. $K(x_1,x_2)$ denotes the so-called \textit{interaction kernel}, a Poincar\'{e} invariant function (or distribution) which generalizes the potential in Eq.\ \eqref{eq:singletimediracint}. The crucial point is that \eqref{eq:inteq} incorporates interactions with time delay which cannot be expressed through a PDE. It has been demonstrated in \cite{direct_interaction_quantum} that for $K(x_1,x_2) \propto \delta((x_1-x_2)_\mu (x_1-x_2)^\mu)$, the Dirac delta distribution along the light cone, one re-obtains \eqref{eq:singletimedirac} with $V(t,\vx_1,\vx_2) \propto \frac{1}{|\vx_1-\vx_2|}$ if one neglects the time delay of the interaction. Thus, \eqref{eq:inteq} constitutes a natural generalization of \eqref{eq:singletimediracint}.

Further support for considering the integral equation \eqref{eq:inteq} comes from the fact that the Bethe-Salpeter (BS) equation of QFT \cite{bs_equation}, which is usually considered an effective equation for a bound state, has a similar form as \eqref{eq:inteq}. That being said, there are also significant physical and mathematical differences between the two equations (see \cite[sec. 3.3]{direct_interaction_quantum}).

\paragraph{Previous results.} To the best of our knowledge, the first results about the existence and uniqueness of dynamics for Eq.\ \eqref{eq:inteq} have been obtained in \cite{mtve}, for the case of a Minkowski half-space and Klein-Gordon (KG) particles. A "Minkowski half-space" means to use $\frac{1}{2} \M \times \frac{1}{2}\M$ with $\frac{1}{2} \M = [0,\infty) \times \R^3$, i.e.\, Minkowski spacetime cut off before $t=0$, as the domain of integration in \eqref{eq:inteq}. The KG case refers to replacing $S_1, S_2$ with (retarded) Green's functions of the KG equation and $\psi^\free$ with a solution of $(\Box_k +m^2_k)\psi^\free = 0,~k=1,2$. The main result in \cite{mtve} was to show that for every $\psi^\free$ which is $L^2$ in the spatial directions and $L^\infty$ in the time directions there is a unique solution $\psi$ with the same properties. In addition, at $t_1=t_2=0$, $\psi^\free$ and $\psi$ agree so that one actually has a Cauchy problem at the initial time. In order to obtain that result, the interaction kernel was assumed to be either bounded or to just have a $1/|\vx_1-\vx_2|$ singularity. In 1+3 dimensions, only the massless case was treated. The proof was based on exploiting a Volterra property which appears for retarded Green's functions and $\frac{1}{2}\M$, i.e.\, the time integrations in \eqref{eq:inteq} reach only from 0 to $x_1^0 $ or $x_2^0$ (given by the time arguments of $\psi$ on the left hand side). This allowed an effective iteration scheme for Eq.\ \eqref{eq:inteq}, leading to a global existence and uniqueness result for a formidable-looking non-Markovian (history dependent) type of dynamics.

The cutoff of spacetime at $t=0$ was introduced in \cite{mtve} to obtain the Volterra property. While such a cutoff destroys Lorentz invariance, there could be physical justification for a beginning in time which is compatible with relativity. Such a justification has been provided in \cite{int_eq_curved}. There, the integral equation was extended to curved spacetimes and analyzed in more detail for certain spacetimes which feature a Big Bang singularity, Friedman-Lema\^itre-Robertson-Walker (FLRW) spacetimes. The Big Bang then provides a natural cutoff in the cosmological time. In this way, the existence of certain classes of fully covariant dynamics for massless KG particles was demonstrated.

\paragraph{Goal of the paper.} Here we would like to extend the previous results to the case of Dirac instead of KG particles. This is desirable as the Dirac equation describes actual elementary particles (fermions) while the KG equation is usually considered only a toy equation as its currents do not have the right properties to play the role of a probability current. Mathematically, the Dirac case is more challenging than the KG case as contrary to the latter, the Dirac Green's functions contain distributional derivatives. A Green's function of the Dirac equation is given by acting with the adjoint Dirac operator $\overline{D} = (-i \gamma^\mu \partial_\mu - m)$ on a Green's function $G(x)$ of the KG equation, i.e.\:
\be
	S(x)= \overline{D}G(x).
\ee
Consequently, one has to define the integral operator in \eqref{eq:inteq} on a function space where one can take certain weak derivatives. In contrast to most of non-relativistic physics, this also concerns the time derivatives here. The choice of function space can be a tricky issue, as the convergence of an iteration scheme (and of the Neumann series, our strategy of proof) requires the integral operator to preserve the regularity, so that the regularity needs to be in harmony with the structure of the integral equation (see Sec.\ \ref{sec:choiceofB}).

\paragraph{Further motivation.}
\begin{enumerate}
	\item It is quite challenging to set up an interacting dynamics for multi-time wave functions. The issue here is not only Lorentz invariance but rather the mere compatibility of the time evolutions in the various time coordinates. A no-go theorem \cite{nogo_potentials,deckert_nickel_2016} for example rules out interaction potentials (which could be Poincar\'{e} invariant functions in the multi-time approach). Thus, interaction is more difficult to achieve for multi-time than for single-time wave functions. So far, the only rigorous, interacting and Lorentz invariant multi-time models for Dirac particles have been constructed in 1+1 spacetime dimensions \cite{1d_model,nt_model} (see, however, \cite{drozvincent_1981,sazdjian_2bd,2bdem} for non-rigorous Lorentz invariant models in 1+3 dimensions and \cite[chap.\ 3]{phd_nickel} for a not fully Lorentz invariant but rigorous model in 1+3 dimensions). Considering these difficulties, the multi-time aspect of our model is interesting in its own right.
	\item Eq.\ \eqref{eq:inteq} defines, in the case of retarded Green's functions, a new class of Volterra-type equations which may be interesting also for researchers specializing in integral equations. It provides a reason why a multi-dimensional Volterra-type equation would be relevant for physics, and shows which properties to expect for applications.
\end{enumerate}

\paragraph{Overview.} The paper is structured as follows. In Sec.\ \ref{sec:setting}, we specify the integral equation \eqref{eq:inteq} in detail. The difficulties with understanding the distributional derivatives are discussed and a suitable function space is identified. Sec.\ \ref{sec:results} contains our main results. In Sec.\ \ref{sec:minkhalfspace}, we formulate an existence and uniqueness theorem (Thm.\ \ref{thm:minkhalfspace}) for Eq.\ \eqref{eq:inteq} on $\frac{1}{2}\M$. It is shown that the relevant initial data are equivalent to Cauchy data at $t=0$. In Sec.\ \ref{sec:flrw}, we provide a physical justification for the cutoff at $t=0$ by extending the results to a FLRW spacetime. In the massless case, we show that an existence and uniqueness theorem can be obtained from the one for $\frac{1}{2}\M$ via conformal invariance. The result, Thm.\ \ref{thm:flrw}, covers a fully relativistic interacting dynamics in 1+3 spacetime dimensions. The proofs are carried out in Sec.\ \ref{sec:proofs}. Sec.\ \ref{sec:discussion} contains a discussion and an outlook on future research.

\section{Setting of the problem} \label{sec:setting}

\subsection{Definition of the integral operator on test functions} \label{sec:aontestfunctions}

In this section, we show how the integral operator in \eqref{eq:inteq} can be defined rigorously on test functions.
We consider the integral equation \eqref{eq:inteq} on the Minkowski half space $\frac{1}{2}\M := [0,\infty) \times \R^3$ equipped with the metric $g = \text{diag}(1,-1,-1,-1)$. We focus on retarded Green's functions of the Dirac equation, $S^\ret(x) = \overline{D} G^\ret(x)$ where $G^\ret(x)$ is the retarded Green's function of the KG equation. Explicitly,
\be
	G^\ret(x) = \frac{1}{4\pi} \frac{\delta(x^0-|\vx|)}{|\vx|} - \frac{m}{4\pi} H(x^0-|\vx|) \frac{J_1(m \sqrt{x^2})}{\sqrt{x^2}}
\ee
where $H$ denotes the Heaviside function, $J_1$ the Bessel function of first kind of order 1 and $x^2 = (x^0)^2 - |\vx|^2$.

In order to define the meaning of the Green's functions as distributions, we introduce a suitable space of test functions:
\be
	\mathscr{D} = \mathscr{S}\big( (\tfrac{1}{2}\M)^2,\CC^{16} \big),
\ee
the space of 16-component Schwartz functions on $(\tfrac{1}{2}\M)^2$.
For a smooth interaction kernel $K$ and a test function $\psi \in \mathscr{D}$, we then understand \eqref{eq:inteq} by formally integrating by parts so that all partial derivatives act on $K \psi$:
\begin{align}
	\psi(x_1,x_2) =\, &\psi^\free(x_1,x_2) +\int_{\frac{1}{2}\M} d^4 x_1' \int_{\frac{1}{2}\M} d^4 x_2' \, G_1^\ret(x_1-x_1') G_2^\ret(x_2-x_2') [D_1 D_2(K \psi)](x_1',x_2')\nonumber\\
&+\text{boundary terms},
\label{eq:inteqpi}
\end{align}
where $D_k = (i \gamma_k^\mu \partial_{x_k^\mu} - m_k)$, $k=1,2$. The boundary terms result from the fact that $\psi(x_1,x_2) \neq 0$ for $x_1^0 = 0$ or $x_2^0 =0$ and are given by:
\begin{align}
	&\int_{\R^3} d^3 \vx_1' \int_{\R^3} d^3 \vx_2' \left. i \gamma_1^0 G_1^\ret(x_1-x_1') i \gamma_2^0 G_2^\ret(x_2-x_2') (K \psi)(x_1',x_2') \right|_{{x_1^0}' = 0,\, {x_2^0}' = 0}\nonumber\\
+ &\int_{\R^3} d^3 \vx_1' \int_{\frac{1}{2} \M} d^4 x_2' \left. i \gamma_1^0 G_1^\ret(x_1-x_1') G_2^\ret(x_2-x_2') D_2 (K \psi)(x_1',x_2') \right|_{{x_1^0}' = 0}\nonumber\\
+  &\int_{\frac{1}{2} \M} d^4 x_1' \int_{\R^3} d^3 \vx_2' \left. G_1^\ret(x_1-x_1')  i \gamma_2^0 G_2^\ret(x_2-x_2') D_1 (K \psi)(x_1',x_2') \right|_{{x_2^0}' = 0}.
\label{eq:boundaryterms}
\end{align}
Now, $G_k^\ret$ still contains the $\delta$-distribution. We use the latter to cancel the integrals over ${x_k^0}'$, $k=1,2$ in \eqref{eq:inteqpi} in the following manner.
\begin{align}
	\frac{1}{4\pi} \int_{\tfrac{1}{2}\M} d^4 x' \, \frac{\delta(x^0-{x^0}'-|\vx-\vx'|)}{|\vx-\vx'|} f(x')
= ~&\frac{1}{4\pi} \int_{B_{x^0}(\vx)} d^3 \vx' \, \frac{1}{|\vx-\vx'|} f(x')|_{{x^0}'=x^0-|\vx-\vx'|}\nonumber\\
= ~&\frac{1}{4\pi} \int_{B_{x^0}(0)} d^3 \vy\, \frac{1}{|\vy|} f(x +y)|_{y^0=-|\vy|}.
\end{align}
Moreover,
\begin{align}
	&\frac{m}{4\pi} \int_{\tfrac{1}{2}\M} \!\! d^4 x' ~ H(x^0 - {x^0}' - |\vx-\vx'|) \frac{J_1(m\sqrt{(x-x')^2})}{\sqrt{(x-x')^2}} f(x')\nonumber\\
= ~& \frac{m}{4\pi} \int_{[-x^0,\infty) \times \R^3} \!\!\!\! d^4 y ~ H(-y^0- |\vy|) \frac{J_1(m\sqrt{y^2})}{\sqrt{y^2}} f(x +y)\nonumber\\
= ~& \frac{m}{4\pi} \int_{-x^0}^0 dy^0 \int_{B_{|y^0|}(0)} d^3 \vy_k \, \frac{J_1(m\sqrt{y^2})}{\sqrt{y^2}} f(x+y).
\end{align}
For the boundary terms, we similarly use
\be
	 \frac{i \gamma^0}{4\pi} \int_{\R^3}d^3 \vx' ~ \frac{\delta(x^0-|\vx-\vx'|)}{|\vx-\vx'|}f(0,\vx') ~=~ \frac{i \gamma^0}{4\pi} \int_{\partial B_{x^0}(0)}d\sigma(\vy) ~ \frac{f(0,\vx+\vy)}{x^0}
\ee
as well as
\begin{align}
	&i \gamma^0\frac{m}{4\pi} \int_{\R^3}d^3 \vx' ~ H(x^0 - {x^0}' - |\vx-\vx'|) \frac{J_1(m\sqrt{(x-x')^2})}{\sqrt{(x-x')^2}} f(x')|_{{x^0}' = 0}\nonumber\\
&= i \gamma^0\frac{m}{4\pi} \int_{B_{x^0}(0)}d^3 \vy ~ \frac{J_1(m\sqrt{(x^0)^2 - \vy^2})}{\sqrt{(x^0)^2 - \vy^2}} f(0,\vx+\vy).
\end{align}
This yields the form of the integral equation which shall be the basis of our investigation:
\be
	\psi(x_1,x_2) = \psi^\free(x_1,x_2) + (A \psi)(x_1,x_2).
\label{eq:inteqschematic}
\ee
The operator $A$ is first defined on test functions $\psi \in \mathscr{D}$ as
\be
	A\psi ~=~ \prod_{j=1,2} \left(A_j^{(1)}(m)D_j + A_j^{(2)}(m)D_j + A_j^{(3)}(m) + A_j^{(4)}(m)\right) K\psi
 \label{eq:defa}   
\ee
where for $j=1,2$, $k=1,2,3,4$ the operator $A_j^{(k)}(m) : \mathscr{D} \rightarrow C^\infty\big((\frac{1}{2}\M)^2,\CC^{16}\big)$  is defined by letting the respective operator $A^{(k)}(m)$ on $\mathscr{S}\big( \tfrac{1}{2} \M, \CC^4 \big)$ , given below, act on the $j$-th 4-variable and spin index of $\psi(x_1,x_2),~ \psi \in \mathscr{D}$.\footnote{We deliberately avoid using tensor products here, as the tensor product of Banach spaces is an ambiguous notion.}
\begin{align}
	\left(A^{(1)}(m) \,f \right)(x) ~&=~ \frac{1}{4\pi} \int_{B_{x^0}(0)} \! \!d^3 \vy ~ \frac{1}{|\vy|} f(x+y)|_{y^0 = -|\vy|}, \label{eq:a1}\\
	\left(A^{(2)}(m) \,f\right)(x) ~&=~ -\frac{m}{4\pi} \int_{-x^0}^0 dy^0 \int_{B_{|y^0|}(0)} d^3 \vy ~ \frac{J_1(m\sqrt{y^2})}{\sqrt{y^2}} f(x+y),\label{eq:a2}\\
	\left(A^{(3)}(m)\, f\right)(x) ~&=~ \frac{i \gamma^0}{4\pi} \int_{\partial B_{x^0}(0)}d\sigma(\vy) ~ \frac{f(0,\vx+\vy)}{x^0}, \label{eq:a3}\\
	\left(A^{(4)}(m) \, f\right)(x) ~&=~ - i \gamma^0 \frac{m}{4\pi} \int_{B_{x^0}(0)}d^3 \vy ~ \frac{J_1(m\sqrt{(x^0)^2 - \vy^2})}{\sqrt{(x^0)^2 - \vy^2}} f(0,\vx+\vy).\label{eq:a4}
\end{align}
Here, the dependence of $A_j^{(1)}$ and $A_j^{(3)}$ on $m$ is only for notational convenience.\\
We now turn to the question of a suitable Banach space for Eq.\ \eqref{eq:inteqschematic}.

\subsection{Choice of Banach space} \label{sec:choiceofB}

In order to prove the existence and uniqueness of solutions, we would like to demonstrate the convergence of the Neumann series. First of all, this requires to extend the integral operator $A$ to an operator on a suitable Banach space $\Banach$. The behavior of solutions $\psi^\free(x_1,x_2)$ of the free Dirac equation in each spacetime variable $x_1,x_2$ suggests to choose the Bochner space
	\be
		\Banach_0 = L^\infty \left([0,\infty)^2_{(x_1^0,x_2^0)}, \,  L^2(\R^6,\CC^{16})_{(\vx_1,\vx_2)}\right)
	\label{eq:banach0}
	\ee
with norm
\be
	\| \psi \|_{\Banach_0} = \esssup_{x_1^0,x_2^0 > 0} \, \| \psi(x_1^0,\cdot,x_2^0,\cdot)\|_{L^2}.
\ee
The reason for choosing $\Banach_0$ is that the spatial norm $\| \psi^\free(x_1^0,\cdot,x_2^0,\cdot)\|_{L^2}$ of a solution of the free Dirac equations is constant in the two time variables $x_1^0,x_2^0$. A very similar space as $\Banach_0$ has been used for analyzing \eqref{eq:inteq} in the KG case \cite{mtve}.

However, as \eqref{eq:defa} involves the Dirac operators $D_1, D_2$, $\Banach_0$ is not sufficient for our problem. An appropriate Banach space $\Banach$ must allow us to take at least weak derivatives of $\psi$.  The choice of $\Banach$ is a delicate matter. One can easily go wrong with demanding too much regularity, as we shall illustrate now.

\paragraph{Possible difficulties with the choice of function space.}

The problem can best be illustrated with an example which is structurally related to \eqref{eq:inteq} but otherwise simpler. Consider the equation for $f: \R^2 \rightarrow \CC$,
\be
	f(t,z) = f^\free(t,z) + \int_0^t dz' \, K(z,z') \partial_t f(t,z'),
	\label{eq:modelinteq}
\ee
where $f^\free, K : \R^2 \rightarrow \CC$ are given. \eqref{eq:modelinteq} is inspired by the term $A_1 D_1$ in \eqref{eq:defa}.

We would like to set up an iteration scheme for \eqref{eq:modelinteq}. As we cannot integrate by parts to shift the $t$-derivative to $K$, we must demand at least weak differentiability of $f$ with respect to $t$. This suggests using a Banach space such as $\Banach = H^1(\R^2)$.
To prove that the integral operator in \eqref{eq:modelinteq} maps $\Banach$ to $\Banach$ (the first step in every iteration scheme), we then have to estimate the $L^2$-norm of
\be
	\partial_t \int_0^t dz' \, K(z,z') \partial_t f(t,z') = K(z,t) (\partial_t f)(t,t) + \int_0^t dz' \, K(z,z') \partial_t^2 f(t,z').
\ee
This expression, however, contains $\partial_t^2 f$. For this to make sense, we must be allowed to take the second weak time derivative of $f$. This, in turn, requires to choose a different Sobolev space, such as $H^2(\R^2)$, and to estimate the $L^2$-norm of the second time derivative of the integral operator acting on $f$ which involves $\partial_t^3 f$, and so on. One is thus led to a Sobolev space where all weak $n$-th time derivatives have to exist. Such infinite-order Sobolev spaces have, in fact, been investigated in \cite{dubinskii_1991}. However, it does not seem realistic to get an iteration to converge on these spaces. We therefore take a different approach. 

\paragraph{A Banach space adapted to our integral equation.} Considering the form of the integral operator $A$ \eqref{eq:defa}, one can see that it is sufficient that the derivatives $D_1 \psi$, $D_2 \psi$ and $D_1 D_2 \psi$ exist in a weak sense. As we want to prove later that $A$ maps the Banach space to itself, we have to estimate, among other things, a suitable norm of $D_1 (A \psi)$. If $\psi \in \mathscr{D}$ is a test function and $K$ is smooth, we have
\begin{align}
	D_1 (A \psi)(x_1,x_2) &= D_1 \int d^4 x_1' \, d^4 x_2' \, S_1(x_1-x_2') S_2(x_2-x_2') K(x_1',x_2') \psi(x_1',x_2')\nonumber\\
	&= \int d^4 x_2' \, S_2(x_2-x_2') K(x_1,x_2') \psi(x_1,x_2')
\label{eq:d1apsi}
\end{align}
where we have used $D_1 S_1(x_1-x_1') = \delta^{(4)}(x_1-x_1')$. The crucial point now is that \eqref{eq:d1apsi} does not contain higher-order derivatives such as $D_1^2 \psi$. The same holds true also for $D_2 (A \psi)$ and $D_1 D_2 (A\psi)$. Thus, the problem of the toy example \eqref{eq:modelinteq} is avoided.

Together with the previous considerations about $\Banach_0$ \eqref{eq:banach0}, we are led to define the Banach space $\Banach_g$ as the completion of $\mathscr{D}$ with respect to the following Sobolev-type norm:
\be
	\| \psi \|^2_g = \esssup_{x_1^0,x_2^0 >0} \frac{1}{g(x_1^0)g(x_2^0)} [\psi]^2(x_1^0,x_2^0)
\label{eq:normpsi}
\ee
where \(g : [0,\infty) \rightarrow (0,\infty) \) is a monotonically increasing function which is such that the function $1/g$ is bounded. We admit such a weight factor with hindsight. As we shall see, a suitable choice of $g$ will make a contraction mapping argument possible.

In \eqref{eq:normpsi} we use the notation
\be
	[\psi]^2(x_1^0,x_2^0) ~= \sum_{k=0}^3 \| (\mathcal{D}_k \psi)(x_1^0, \cdot, x_2^0,\cdot)\|^2_{L^2(\R^6,\CC^{16})}
\label{eq:spatialnorm}
\ee
with
\be
	\mathcal{D}_k = \left\{ \begin{array}{cl} 1, &k=0\\ D_1, & k=1\\ D_2, & k=2 \\ D_1 D_2, & k=3 \end{array}\right. 
\label{eq:defdk}
\ee

\paragraph{Remark.} One can see the purpose of integral equation \eqref{eq:inteq} in determining an interacting correction to a solution $\psi^\free$ of the free multi-time Dirac equations $D_i \psi^\free = 0,~i=1,2$. Therefore, it is important to check that sufficiently many solutions of these free equations lie in $\Banach_g$. This is ensured by the following Lemma (see Sec. \ref{sec:freesolutionsinbanach} for a proof).

\begin{lemma}
	Let $\psi^\free$ be a solution of the free multi-time Dirac equations $D_i \psi^\free = 0,~i=1,2$ with initial data $\psi^\free(0,\cdot,0,\cdot) = \psi_0 \in C_c^\infty(\R^6,\CC^{16})$. Furthermore, let $g : [0,\infty) \rightarrow (0,\infty)$ be a monotonically increasing function with $g(t) \rightarrow \infty$ for $t \rightarrow \infty$ and \(g(0)=1\). Then $\psi^\free$ lies in $\Banach_g$.
	\label{thm:freesolutionsinbanach}
\end{lemma}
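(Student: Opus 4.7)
The proof naturally splits into two parts: first, showing that $\|\psi^\free\|_g$ is finite, and second, exhibiting $\psi^\free$ as the limit of a Cauchy sequence in $\mathscr{D}$ so that it may be identified with an element of the completion $\Banach_g$.

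The finiteness part is almost immediate from the free equations. Since $D_1\psi^\free = D_2\psi^\free = 0$, all three of $\mathcal{D}_1\psi^\free$, $\mathcal{D}_2\psi^\free$, $\mathcal{D}_3\psi^\free = D_1 D_2\psi^\free$ vanish identically, so the sum in \eqref{eq:spatialnorm} collapses to the $k=0$ term:
\[
[\psi^\free]^2(x_1^0,x_2^0) \;=\; \|\psi^\free(x_1^0,\cdot,x_2^0,\cdot)\|_{L^2(\R^6,\CC^{16})}^2.
\]
Rewriting each multi-time equation in Hamiltonian form by multiplying $D_k\psi^\free = 0$ with $\gamma_k^0$ from the left yields $i\partial_{x_k^0}\psi^\free = H_k^\Dirac\psi^\free$, with $H_1^\Dirac,H_2^\Dirac$ self-adjoint on $L^2(\R^6,\CC^{16})$ and commuting (they act on disjoint spatial variables). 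Hence $\psi^\free(x_1^0,\cdot,x_2^0,\cdot) = e^{-iH_1^\Dirac x_1^0}e^{-iH_2^\Dirac x_2^0}\psi_0$ is a unitary image of $\psi_0$, so $[\psi^\free]^2 \equiv \|\psi_0\|_{L^2}^2$. Combined with $g(x_k^0)\ge g(0)=1$, this gives $\|\psi^\free\|_g \le \|\psi_0\|_{L^2} < \infty$.

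For the second part, I would approximate $\psi^\free$ by $\psi_R := \chi_R(x_1^0)\chi_R(x_2^0)\psi^\free$, where $\chi_R\in C^\infty([0,\infty);[0,1])$ equals $1$ on $[0,R]$ and vanishes outside $[0,R+1]$ with $|\chi_R'|\le C$. Because $\psi_0\in C_c^\infty(\R^6,\CC^{16})$ and the Dirac evolution has finite propagation speed, $\psi^\free$ is spatially compactly supported on every slice $\{x_1^0=t_1,\,x_2^0=t_2\}$, so $\psi_R\in C_c^\infty((\tfrac{1}{2}\M)^2,\CC^{16})\subset\mathscr{D}$. Applying the Leibniz rule $D_j(\chi\,\psi^\free) = i\gamma_j^0(\partial_{x_j^0}\chi)\psi^\free + \chi\, D_j\psi^\free$ and using $D_j\psi^\free = 0$, every term of $\mathcal{D}_k(\psi^\free-\psi_R)$ picks up only factors of $\chi_R'$ or $(1-\chi_R\chi_R)$. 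Such factors are supported in the strip where at least one $x_j^0\ge R$, on which $1/g(x_j^0)\le 1/g(R)$, while the spatial $L^2$-norm of each error term is uniformly bounded by a constant times $\|\psi_0\|_{L^2}$. The growth assumption $g(R)\to\infty$ therefore forces $\|\psi^\free-\psi_R\|_g \to 0$, which identifies $\psi^\free$ with a Cauchy limit of elements of $\mathscr{D}$ and hence with an element of $\Banach_g$.

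\textbf{Main obstacle.} The finiteness estimate is really a corollary of the free equations, but the approximation step is the delicate point: it relies simultaneously on the compact support of $\psi_0$ (to keep each $\psi_R$ in $\mathscr{D}$), on the Leibniz structure of $D_j$ (which confines derivative-of-cutoff errors to a thin time strip), and on $g(t)\to\infty$ (which suppresses that strip in the weighted norm). If $g$ were bounded, this cutoff strategy would break down and one would need a genuinely different argument for density.
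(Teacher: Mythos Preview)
Your proof is correct and follows essentially the same approach as the paper: both multiply $\psi^\free$ by a time cutoff that equals $1$ on $[0,R]$ (resp.\ $[0,m]$), use the Leibniz rule together with $D_j\psi^\free=0$ to reduce the error terms to factors of the cutoff's derivative, and then exploit $g(R)\to\infty$ to kill the resulting time-strip contribution. The only cosmetic difference is that the paper uses the soft Schwartz-type cutoff $e^{-\eta(t-m)(t-m)}$ while you use a compactly supported bump $\chi_R$; both land in $\mathscr{D}$ thanks to finite propagation speed, and the estimates are the same.
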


Given the definition of $A$ on $\mathscr{D}$ as in Sec.\ \ref{sec:aontestfunctions}, we shall now proceed with showing that $A$ is bounded on this space. Furthermore, we show that for a suitable choice of the weight factor $g$ in $\Banach_g$, we can achieve $\| A\| < 1$ on $\mathscr{D}$. This allows to extend $A$ to a contraction on $\Banach_g$ so that the Neumann series $\psi = \sum_{k=0}^\infty A^k \psi^\free$ yields the unique solution of $\psi = \psi^\free + A\psi$.

\section{Results} \label{sec:results}

\subsection{Results for a Minkowski half space} \label{sec:minkhalfspace}

The core of our results is the following Lemma which allows us to control the growth of the spatial norm of $\psi$ with the two time variables.

\begin{lemma} \label{thm:boundsa} 
	Let $\psi \in \mathscr{D}$, $\slashed{\partial}_k = \gamma_k^\mu \partial_{k,\mu},~k=1,2$ and let $K \in C^2(\R^{8},\CC)$ with
	\be
		\| K \| := \sup_{x_1, x_2 \in \frac{1}{2}\M } \max \left\{ |K(x_1,x_2)|, |\slashed{\partial}_1 K(x_1,x_2)|, |\slashed{\partial}_2 K(x_1,x_2)|, |\slashed{\partial}_1 \slashed{\partial}_2 K(x_1,x_2)|\right\} < \infty.
	\label{eq:normk}
	\ee
Then we have:
\begin{align}
		[ A \psi]^2(x_1^0,x_2^0)  ~\leq ~& \|K\|^2 \prod_{j=1,2} \big( \id + 8\mathcal{A}_j(m_j) \big) \, [\psi]^2(x_1^0,x_2^0),
\label{eq:spatialnormapsi}
\end{align}
where $\mathcal{A}_j(m) = \sum_{k=1}^4\mathcal{A}_j^{(k)}(m)$ with $\mathcal{A}_j^{(k)}$ as defined in \eqref{eq:defcurlyoperators}. $[\psi ]^2(x_1^0,x_2^0)$ is understood as a function in $C^\infty\big( (\tfrac{1}{2}\M)^2\big)$ to which the operators in front of it are applied.
\end{lemma}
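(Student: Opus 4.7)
The proof rests on the distributional identity $D_j S_j = \delta^{(4)}$ (recall $S_j = \overline{D}_j G_j^\ret$ and $D_j \overline{D}_j = \Box_j + m_j^2$), which means that applying $D_j$ to the integral operator $A$ collapses its $x_j'$-integration, exactly as in the pointwise computation carried out in \eqref{eq:d1apsi}. Combining this with the decomposition \eqref{eq:defa}--\eqref{eq:a4} and the integration-by-parts structure underlying \eqref{eq:inteqpi}--\eqref{eq:boundaryterms}, I obtain on $\mathscr{D}$ the structural representations
\begin{align*}
D_1 D_2 (A\psi) &= K\psi, \\
D_j (A\psi) &= \sum_{k=1}^{4} A_{3-j}^{(k)}(m_{3-j})\, g_k \qquad (j=1,2), \\
A\psi &= \sum_{k_1,k_2=1}^{4} A_1^{(k_1)}(m_1) A_2^{(k_2)}(m_2)\, f_{k_1 k_2},
\end{align*}
where each $g_k$ equals $D_{3-j}(K\psi)$ for an interior index $k\in\{1,2\}$ (produced by the two pieces of $G_{3-j}^\ret$) and $K\psi$ evaluated at $x_{3-j}'^{\,0}=0$ for a boundary index $k\in\{3,4\}$; analogously each $f_{k_1 k_2}$ is one of $K\psi$, $D_1(K\psi)$, $D_2(K\psi)$ or $D_1 D_2(K\psi)$ (evaluated on the appropriate boundaries) according to whether each $k_j$ lies in the interior set $\{1,2\}$ or in the boundary set $\{3,4\}$.

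The second step is to use the Leibniz rule $D_j(K\psi) = (i\slashed{\partial}_j K)\psi + K D_j\psi$, the bound \eqref{eq:normk}, and the elementary inequality $(\sum_{i=1}^{n}a_i)^2 \leq n\sum_{i=1}^{n}a_i^2$ to derive the pointwise estimates
\begin{equation*}
|K\psi|^2 \leq \|K\|^2 |\psi|^2, \quad |D_j(K\psi)|^2 \leq 2\|K\|^2(|\psi|^2 + |D_j\psi|^2), \quad |D_1 D_2(K\psi)|^2 \leq 4\|K\|^2\sum_{\ell=0}^{3}|\mathcal{D}_\ell\psi|^2.
\end{equation*}
In particular, every $|g_k|^2$ and every $|f_{k_1 k_2}|^2$ is dominated by $4\|K\|^2\sum_{\ell=0}^{3}|\mathcal{D}_\ell\psi|^2$.

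The third step passes to the spatial $L^2$-norm in $(\vx_1,\vx_2)$. Applying the same $n$-term inequality to each sum $\sum_{k=1}^{4}$ costs a factor of $4$ per $k$-sum, and a pointwise Cauchy--Schwarz on the (positive) integral kernel of each $A^{(k)}$ gives $|A^{(k)}(m) f(x)|^2 \leq (\mathcal{A}^{(k)}(m)|f|^2)(x)$, which is the defining property of the operators $\mathcal{A}^{(k)}$ in \eqref{eq:defcurlyoperators}. Integrating over $(\vx_1,\vx_2)\in\R^6$ and using Fubini to commute the $\mathcal{A}_j^{(k)}$---which act only in the time variables---past the spatial $L^2$-integration, I obtain, at fixed $(x_1^0,x_2^0)$,
\begin{align*}
\|D_1 D_2(A\psi)\|^2_{L^2(\R^6)} &\leq \|K\|^2\, [\psi]^2(x_1^0,x_2^0), \\
\|D_j(A\psi)\|^2_{L^2(\R^6)} &\leq 8\|K\|^2\, \mathcal{A}_{3-j}(m_{3-j})\,[\psi]^2(x_1^0,x_2^0) \qquad (j=1,2), \\
\|A\psi\|^2_{L^2(\R^6)} &\leq 64\|K\|^2\, \mathcal{A}_1(m_1)\mathcal{A}_2(m_2)\,[\psi]^2(x_1^0,x_2^0).
\end{align*}
Here the factor $8 = 2\cdot 4$ is the product of the Leibniz factor $2$ from $|D_j(K\psi)|^2$ and the Cauchy--Schwarz factor $4$ from the single $k$-sum; the factor $64$ is the analogous product across both variables. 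Summing the four contributions yields
\begin{equation*}
[A\psi]^2(x_1^0,x_2^0) \;\leq\; \|K\|^2\bigl(\id + 8\mathcal{A}_1(m_1)\bigr)\bigl(\id + 8\mathcal{A}_2(m_2)\bigr)[\psi]^2(x_1^0,x_2^0),
\end{equation*}
which is exactly \eqref{eq:spatialnormapsi}.

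I expect the main technical obstacle to be verifying the pointwise inequality $|A^{(k)} f(x)|^2 \leq (\mathcal{A}^{(k)}|f|^2)(x)$ cleanly for all four components of $G^\ret$---in particular for the massive Bessel pieces $A^{(2)}$ and $A^{(4)}$, where one must keep careful track of the retarded support $y^0 \leq -|\vy|$ and of the behaviour of $J_1(m\sqrt{y^2})/\sqrt{y^2}$ near the light cone---and in justifying the Fubini exchanges on the full six-dimensional integrals. Once this single-variable conversion $A^{(k)} \mapsto \mathcal{A}^{(k)}$ is in hand, the $4\times 4$ combinatorial bookkeeping above is essentially forced.
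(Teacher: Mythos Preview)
Your proposal is correct and follows essentially the same route as the paper: the paper likewise derives the four structural identities $D_1D_2(A\psi)=K\psi$, $D_j(A\psi)=\sum_k A_{3-j}^{(k)}(\cdots)$, $A\psi=\sum_{k_1,k_2}A_1^{(k_1)}A_2^{(k_2)}(\cdots)$, applies the Leibniz/Young bounds on $D_j(K\psi)$ and $D_1D_2(K\psi)$, and converts $A^{(k)}\to\mathcal{A}^{(k)}$ by Cauchy--Schwarz plus Fubini to obtain exactly the constants $1,8,8,64$ before summing into the product $\prod_j(\id+8\mathcal{A}_j)$. The only caveat is that your inequality $|A^{(k)}f(x)|^2\leq(\mathcal{A}^{(k)}|f|^2)(x)$ is not literally pointwise (the $\mathcal{A}^{(k)}$ act on time variables only); the conversion really happens after the $\R^6$-integration, which you in fact invoke in the next sentence, and which is precisely what the paper's Lemma~\ref{thm:estimatetensoroperators} records.
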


The proof can be found in Sec.\ \ref{sec:proofboundsa}.

Lemma \ref{thm:boundsa} can now be used to identify (with some trial and error) a suitable weight factor $g$ which allows us to extend $A$ to a contraction on $\Banach_g$. Our main result is:

\begin{theorem}[Existence and uniqueness of dynamics on a Minkowski half space.] \label{thm:minkhalfspace}
	Let $0 < \| K \| < 1$, $\mu = \max \{ m_1,m_2\}$ and
	\begin{align} 
		g(t)& ~=~ \sqrt{1+b t^8}\, \exp(b t^8/16),	\label{eq:defg}\\
    		b& ~=~ \frac{\|K\|^4}{(1-\|K\|)^4} \left(16+\mu^4\right)^4. \label{eq:defb}
\end{align}
	Then for every $\psi^\free \in \Banach_g$, the equation $\psi = \psi^\free + A \psi$ possesses a unique solution $\psi \in \Banach_g$. Here, $A$ is defined in Eqs. \eqref{eq:defa}-\eqref{eq:a4}.
\end{theorem}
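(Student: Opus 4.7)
The plan is to extend $A$ from $\mathscr{D}$ to a contraction on $\Banach_g$, at which point the Banach fixed-point theorem (equivalently, absolute convergence of the Neumann series $\psi = \sum_{k=0}^{\infty} A^k \psi^\free$) delivers the unique solution of $\psi = \psi^\free + A\psi$. The roles of the hypotheses $0 < \|K\| < 1$ and of the explicit form of $g$ and $b$ in \eqref{eq:defg}--\eqref{eq:defb} are precisely to make the contraction constant come out strictly below $1$.

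\textbf{Reduction to a scalar inequality.} For $\psi \in \mathscr{D}$, Lemma \ref{thm:boundsa} gives
\begin{equation}
[A\psi]^2(x_1^0, x_2^0) \leq \|K\|^2 \prod_{j=1,2}\bigl(\id + 8 \mathcal{A}_j(m_j)\bigr) [\psi]^2(x_1^0, x_2^0).
\end{equation}
Combining this with the pointwise bound $[\psi]^2(y_1^0, y_2^0) \leq g(y_1^0) g(y_2^0) \|\psi\|_g^2$ and the fact that the majorizing operators $\mathcal{A}_j$ are positivity-preserving and each $\mathcal{A}_j$ acts only on the $j$-th time coordinate, the supremum factorizes and the contraction claim reduces to the one-variable estimate
\begin{equation}
\sup_{t > 0} \frac{(\id + 8 \mathcal{A}(m))\, g(t)}{g(t)} < \frac{1}{\|K\|} \qquad \text{for each } m \in \{m_1, m_2\}.
\end{equation}
Given this, $\|A\psi\|_g \leq c \|\psi\|_g$ with $c < 1$ holds on $\mathscr{D}$; since $\Banach_g$ is by construction the completion of $\mathscr{D}$, $A$ extends uniquely to a contraction on $\Banach_g$, and the fixed-point argument concludes the proof. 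Lemma \ref{thm:freesolutionsinbanach} ensures that physically relevant free solutions actually lie in $\Banach_g$ for the chosen $g$.

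\textbf{The weight calculation.} The analytic core is the explicit majorization of each $\mathcal{A}^{(k)}(m) g$. Each $A^{(k)}(m)$ from \eqref{eq:a1}--\eqref{eq:a4} integrates over a portion of the past light cone of the point with time coordinate $t$, so routine estimates --- using $|J_1(x)| \leq \min(x/2, 1)$ for the mass-dependent pieces $\mathcal{A}^{(2)}$ and $\mathcal{A}^{(4)}$, elementary volume bounds for $B_{|y^0|}(0)$ and surface-area bounds for $\partial B_t(0)$, and the monotonicity of $g$ --- yield estimates of the schematic form $(\mathcal{A}^{(k)}(m) g)(t) \leq C_k(m)\, t^{n_k}\, g(t)$ with polynomial exponents $n_k$ that can be arranged to be at most $8$ after combining the four contributions and the two time variables. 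The exponent $8$ in $g(t) = \sqrt{1+bt^8}\,\exp(bt^8/16)$ is selected to dominate all of these polynomial factors, while the prefactor $\sqrt{1+bt^8}$ keeps the ratio $(\id+8\mathcal{A}(m))g/g$ bounded near $t=0$. The value $b = \|K\|^4/(1-\|K\|)^4 \cdot (6+\mu^4)^4$ is calibrated so that, after summing the $C_k(m) t^{n_k}$-contributions (the constant $6$ coming from the massless operators $\mathcal{A}^{(1)}, \mathcal{A}^{(3)}$, and $\mu^4$ absorbing the mass dependence of $\mathcal{A}^{(2)}, \mathcal{A}^{(4)}$), the resulting ratio is strictly less than $1/\|K\|$.

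\textbf{Main obstacle.} The hard part is this last bookkeeping step: tracking four distinct integral operators --- two of them involving the Bessel function $J_1$ and hence mass-dependent --- across both time variables, isolating the precise polynomial growth in $t$ they produce when applied to $g$, and pinning down the numerical constants so that the target bound $1/\|K\|$ is actually achieved with the prescribed $b$. The appearance of $(1-\|K\|)^4$ in the denominator of $b$ is the concrete trace of this calibration: the closer $\|K\|$ is to $1$, the steeper the weight must grow in order to absorb the full $\prod_{j=1,2}(\id + 8\mathcal{A}_j(m_j))$ combination, and this in turn forces the structure of the Sobolev-type norm \eqref{eq:normpsi} to carry the weight $1/(g(x_1^0)g(x_2^0))$ rather than anything lighter.
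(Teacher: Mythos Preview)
Your overall architecture matches the paper's: invoke Lemma~\ref{thm:boundsa}, bound $[\psi]^2$ pointwise via the norm, factorize over the two time variables to reduce to a scalar inequality in one time variable, and conclude by Banach's fixed-point theorem after extending $A$ from $\mathscr{D}$ to $\Banach_g$.

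However, your ``weight calculation'' paragraph misidentifies the crux and would not close as written. You describe the step as ``routine estimates'' producing bounds of the form $(\mathcal{A}^{(k)}g)(t)\le C_k(m)\,t^{n_k}\,g(t)$, but a bound of this shape is useless: dividing by $g$ leaves an unbounded polynomial in $t$, and no choice of $b$ rescues it. The paper's actual mechanism is the \emph{exact} antiderivative identity (Lemma~\ref{lem:int})
\[
\int_0^t g^2(\tau)\,d\tau \;=\; \frac{t}{1+bt^8}\,g^2(t),
\]
which is the true reason for the specific form of $g$ in~\eqref{eq:defg}. Feeding this into $\mathcal{A}^{(1)}g^2$ and $\mathcal{A}^{(2)}g^2$ yields factors $t^c/(1+bt^8)$ rather than bare polynomials; Lemma~\ref{lem:someidentities} then bounds their suprema by explicit negative powers of $b$. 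For $\mathcal{A}^{(3)}g^2$ and $\mathcal{A}^{(4)}g^2$ one uses $1/g^2(t)\le(1+bt^8)^{-1}$ to the same effect. Summing gives $\|A\|\le\|K\|+\|K\|\,b^{-1/4}(6+\mu^4)$, and the value of $b$ in~\eqref{eq:defb} is reverse-engineered so that the right-hand side equals exactly~$1$. None of this is visible in your sketch.

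Two smaller points. First, the $\mathcal{A}^{(k)}$ in Lemma~\ref{thm:boundsa} are already the scalar operators of~\eqref{eq:defcurlyoperators} acting on functions of $t$ alone; the Bessel estimate $|J_1(x)/x|\le\tfrac12$ was consumed earlier (Lemma~\ref{thm:estimatetensoroperators}) in passing from $A^{(k)}$ to $\mathcal{A}^{(k)}$, so it does not reappear here. Second, the paper carries out the one-variable estimate with $g^2$ rather than $g$; this is what makes Lemma~\ref{lem:int} applicable, so your reduction should be phrased accordingly.
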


The proof is given in Sec.\ \ref{sec:proofminkhalfspace}.

\paragraph{Remarks.}
\begin{enumerate}
	\item Note that Thm. \ref{thm:minkhalfspace} establishes the existence and uniqueness of a global-in-time solution. The non-Markovian nature of the dynamics makes it necessary to prove such a result directly instead of concatenating short-time solutions. The key step in our proof which makes the global-in-time result possible is the suitable choice of the weight factor $g$.
	\item The main condition in Thm.\ \ref{thm:minkhalfspace} is $\| K \| < 1$. This means that the interaction must not be too strong (in a suitable sense). A condition of that kind is to be expected solely because of the contribution $\| (D_1 D_2 (A \psi))(x_1^0,\cdot, x_2^0,\cdot)\|_{L^2} = \| K \psi(x_1^0,\cdot,x_2^0,\cdot)\|_{L^2}$ to $[A \psi](x_1^0,x_2^0)$. Taking our
strategy for setting up the Banach space for granted, we therefore think that one cannot avoid a condition on the interaction strength. Note that conditions on the interaction strength also occur at other places in quantum theory (albeit in a different sense). For example, the Dirac Hamiltonian plus a Coulomb potential is only self-adjoint if the prefactor of the latter is smaller than a certain value.
	\item \textit{Cauchy problem.} Thm.\ \ref{thm:minkhalfspace} shows that $\psi^\free$ uniquely determines the solution $\psi$. However, specifying a whole function in $\Banach_g$ amounts to a lot of data. In case $\psi^\free$ is a solution of the free multi-time Dirac equations $D_1 \psi^\free = 0 = D_2 \psi^\free$ much less data are needed. $\psi^\free$ is then determined uniquely by Cauchy data, and hence $\psi$ is as well. 
Furthermore, if $\psi^\free$ is differentiable, \eqref{eq:inteq} yields
	\be
		\psi(0,\vx_1,0,\vx_2) ~=~ \psi^\free(0,\vx_1,0,\vx_2).
	\label{eq:cauchyproblem}
	\ee
	Thus, Cauchy data for $\psi^\free$ at $x_1^0 = x_2^0 = 0$ are also Cauchy data for $\psi$. The procedure works for arbitrary Cauchy data which are appropriate for the free multi-time Dirac equations.  Note, however, that a Cauchy problem for $\psi$ for times $x_1^0 = t_0 = x_2^0$ with $t_0 > 0$ is not possible. The reason is that $\psi(t_0,\vx_1,t_0,\vx_2) \neq \psi^\free(t_0,\vx_1,t_0,\vx_2)$ in general (and contrary to \eqref{eq:cauchyproblem} the point-wise evaluation may not make sense for $\psi$).
\end{enumerate}

\subsection{Results for a FLRW universe with a Big Bang singularity} \label{sec:flrw}

In this section we show that a Big Bang singularity provides a natural and covariant justification for the cutoff at $t = 0$. As this justification is our main goal, we make the point at the example of a particular class of Friedman-Lema\^itre-Robertson-Walker (FLRW) spacetimes and do not strive to treat more general spacetimes here. The reason for studying these FLRW spacetimes is that they are conformally equivalent to $\frac{1}{2} \M$ \cite{ibison}. Together with the conformal invariance of the massless Dirac operator this allows for an efficient method of calculating the Green's functions which occur in the curved spacetime analog of the integral equation \eqref{eq:inteq}. By doing this, we show that the existence and uniqueness result on these spacetimes can be reduced to Thm.\ \ref{thm:minkhalfspace}.

As shown in \cite{int_eq_curved}, Eq.\ \eqref{eq:inteq} possesses a natural generalization to curved spacetimes $\mathcal{M}$,
\be
	\psi(x_1,x_2) = \psi^\free(x_1,x_2) + \int dV(x_1') \int dV(x_2') \, G_1(x_1,x_1') G_2(x_2,x_2') K(x_1',x_2') \psi(x_1',x_2').
	\label{eq:inteqcurved}
\ee
Here, $dV(x)$ is the spacetime volume element, $S_i$ are (retarded) Green's functions of the respective free wave equation, i.e.\
\be
	D G(x,x') = [-g(x)]^{-1/2} \, \delta^{(4)}(x,x'),
	\label{eq:greensfndefcurved}
\ee
where $g(x)$ is the metric determinant, $D$ the covariant Dirac operator on $\mathcal{M}$, and $\psi$ a section of the tensor spinor bundle over $\mathcal{M} \times \mathcal{M}$.

In order to explicitly formulate \eqref{eq:inteqcurved}, we need to know the detailed form of $S^\ret$. Note that results for general classes of spacetimes showing that $S^\ret$ is a bounded operator on a suitable function space are not sufficient to obtain a strong (global in time) existence and uniqueness result. We therefore focus on the case of a flat FLRW universe where it is easy to determine the Green's functions explicitly. In that case, the metric is given by
\be
	ds^2 = a^2(\eta) [d \eta^2 - d \vx^2]
	\label{eq:metricflrw}
\ee
where $\eta$ is cosmological time and $a(\eta)$ denotes the so-called \textit{scale factor}. The coordinate ranges are given by $\eta \in [0,\infty)$ and $\vx \in \R^3$. For a FLRW universe with a Big Bang singularity, $a(\eta)$ is a continuous, monotonically increasing function of $\eta$ with $a(\eta)=0$, corresponding to the Big Bang singularity. The spacetime volume element reads
\be
	dV(x) = a^4(\eta) \, d \eta\, d^3 \vx.
\ee
The crucial point now is that according to \eqref{eq:metricflrw} the spacetime is globally conformally equivalent to $\frac{1}{2}\M$, with conformal factor
\be
	\Omega(x) = a(\eta).
\ee
In addition, for $m=0$, the Dirac equation is known to be conformally invariant (see e.g.\ \cite{penrose_rindler}). More accurately, consider two spacetimes $\mathcal{M}$ and $\widetilde{\mathcal{M}}$ with metrics
\be
	\widetilde{g}_{ab} = \Omega^2 \, g_{ab}.
\ee
Then the massless Dirac operator $D$ on $\mathcal{M}$ is related to the massless Dirac operator $\widetilde{D}$ on $\widetilde{\mathcal{M}}$ by (see \cite{haantjes}):
\be
	\widetilde{D} = \Omega^{-5/2} \, D \, \Omega^{3/2}.
	\label{eq:diracoperatortrafo}
\ee
This implies the following transformation behavior of the Green's functions:
\be
	\widetilde{G}(x,x') = \Omega^{-3/2}(x) \,\Omega^{-3/2}(x') \, G(x,x').
\ee
One can verify this easily using \eqref{eq:diracoperatortrafo} and the definition of Green's functions on curved spacetimes \eqref{eq:greensfndefcurved}.

Denoting the Green's functions of the Dirac operator on Minkowski spacetime by $G(x,x')=S(x-x')$ and using coordinates $\eta, \vx$ we thus obtain the Green's functions $\widetilde{G}$ on flat FLRW spacetimes as:
\be
	\widetilde{G}(\eta,\vx; \eta', \vx') = a^{-3/2}(\eta) a^{-3/2}(\eta')\, S(\eta-\eta', \vx-\vx').
\ee
With this result, we can write out in detail the multi-time integral equation \eqref{eq:inteqcurved} for massless Dirac particles on flat FLRW spacetimes (using retarded Green's functions):
\begin{align}
	&\psi(\eta_1,\vx_1,\eta_2,\vx_2) = \psi^\free(\eta_1,\vx_1,\eta_2,\vx_2) + a^{-3/2}(\eta_1) a^{-3/2}(\eta_2) \int_0^\infty d \eta_1' \int d^3 \vx_1' \int_0^\infty d \eta_2' \int d^3 \vx_2' \nonumber\\
	&\times a^{5/2}(\eta_1') a^{5/2}(\eta_2') \, S_1^\ret(\eta_1-\eta_1', \vx_1-\vx_1') S_2^\ret(\eta_2-\eta_2',\vx_2-\vx_2') \, (K \psi)(\eta_1',\vx_1',\eta_2',\vx_2').
\end{align}
Note that we can regard $\psi$ as a map $\psi : (\frac{1}{2}\M)^2 \rightarrow \CC^{16}$ as the coordinates $\eta, \vx$ cover the flat FLRW spacetime manifold globally.

It seems reasonable to allow for a singularity of the interaction kernel, i.e.\
\be
	K(\eta_1,\vx_1,\eta_2,\vx_2) = a^{-\alpha}(\eta_1)  a^{-\alpha}(\eta_1) \, \widetilde{K}(\eta_1,\vx_1,\eta_2,\vx_2).
	\label{eq:ksingularity}
\ee
Here, $\alpha \geq 0$. The singular behavior is motivated by that of the Green's functions of the conformal wave equation\footnote{The conformal wave equation reads $(\Box- R/6)\phi = 0$ where $\Box$ is the d'Alembertian and $R$ the Ricci scalar of the respective spacetime.}. Recall from the introduction that the most natural interaction kernel on $\frac{1}{2}\M$ would be $K(x_1,x_2)\propto \delta((x_1-x_2)_\mu(x_1-x_2)^\mu)$ which is a Green's function of the wave equation -- a concept that can be generalized to curved spacetimes using the conformal wave equation. Now, under conformal transformations, Green's functions of that equation transform as \cite{john}
\be
	\widetilde{G}(x,x') = \Omega^{-1}(x) \,\Omega^{-1}(x') \, G(x,x'),
\ee
which corresponds to $\alpha = 1$ in \eqref{eq:ksingularity}.

Considering \eqref{eq:ksingularity}, our integral equation becomes:
\begin{align}
	&\psi(\eta_1,\vx_1,\eta_2,\vx_2) = \psi^\free(\eta_1,\vx_1,\eta_2,\vx_2) + a^{-3/2}(\eta_1) a^{-3/2}(\eta_2) \int_0^\infty d \eta_1' \int d^3 \vx_1' \int_0^\infty d \eta_2' \int d^3 \vx_2'\nonumber\\
	\times &a^{5/2-\alpha}(\eta_1') a^{5/2-\alpha}(\eta_2') \, S_1^\ret(\eta_1-\eta_1', \vx_1-\vx_1') S_2^\ret(\eta_2-\eta_2',\vx_2-\vx_2') \, (\widetilde{K} \psi)(\eta_1',\vx_1',\eta_2',\vx_2').
\label{eq:inteqcurvedexplicit}
\end{align}
Apart from the scale factors which produce a certain singularity of $\psi$ for $\eta_1,\eta_2 \rightarrow 0$, this integral equation has the form of \eqref{eq:inteq} on $\frac{1}{2}\M$. Indeed, we can use the transformation
\be
	\chi(\eta_1,\vx_1,\eta_2,\vx_2) = a^{3/2}(\eta_1) a^{3/2}(\eta_2)\, \psi(\eta_1,\vx_1,\eta_2,\vx_2)
	\label{eq:psichi}
\ee
to transform the two equations into each other. We arrive at the following result.

\begin{theorem}[Existence and uniqueness of dynamics on a flat FLRW universe] \label{thm:flrw}
	Let, $0 \leq \alpha \leq 1$ and let $a : [0,\infty) \rightarrow [0,\infty)$ be a differentiable function with $a(0)=0$ and $a(\eta) >0$ for $\eta>0$. Moreover, assume that $\widetilde{K} \in C^2 \left( ([0,\infty)\times \R^3)^2,\CC\right)$ with
	\be
		\| a^{1-\alpha}(\eta_1) a^{1-\alpha}(\eta_2) \, \widetilde{K} \| <1.
	\label{eq:ktildecondition}
	\ee
 Then for every $\psi^\free$ with $a^{3/2}(\eta_1) a^{3/2}(\eta_2) \psi^\free \in \Banach_g$, \eqref{eq:inteqcurvedexplicit} has a unique solution $\psi$ with $a^{3/2}(\eta_1) a^{3/2}(\eta_2)\psi \in \Banach_g$ (and with \(g\) as in Thm. \ref{thm:minkhalfspace}).
\end{theorem}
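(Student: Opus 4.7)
The plan is to reduce Theorem \ref{thm:flrw} directly to Theorem \ref{thm:minkhalfspace} via the conformal rescaling $\chi = a^{3/2}(\eta_1)a^{3/2}(\eta_2)\,\psi$ already flagged in \eqref{eq:psichi}. The whole point of isolating the factor $a^{5/2-\alpha}$ in \eqref{eq:inteqcurvedexplicit} was to arrange that, after this substitution, the FLRW integral equation becomes an equation of exactly the form $\chi = \chi^\free + A\chi$ on $\tfrac{1}{2}\M$, with $A$ built from massless retarded Minkowski Green's functions $S_i^\ret$ and a new, effective kernel that captures both $\widetilde K$ and the residual powers of $a$. Since the coordinates $(\eta,\vx)$ globally cover the flat FLRW manifold and identify it with $\tfrac{1}{2}\M$, this reduction is kinematically clean.

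Concretely, I would set $\chi^\free := a^{3/2}(\eta_1)a^{3/2}(\eta_2)\psi^\free$ and multiply \eqref{eq:inteqcurvedexplicit} on both sides by $a^{3/2}(\eta_1)a^{3/2}(\eta_2)$, which cancels the prefactor in front of the integral. Inside the integrand I would substitute $\psi(\eta_j',\vx_j') = a^{-3/2}(\eta_1')a^{-3/2}(\eta_2')\chi(\eta_j',\vx_j')$; this merges with the $a^{5/2-\alpha}(\eta_j')$ to leave exactly $a^{1-\alpha}(\eta_1')a^{1-\alpha}(\eta_2')$, which I would absorb into $\widetilde K$ to define the effective kernel
\begin{equation}
K_{\mathrm{eff}}(x_1',x_2') := a^{1-\alpha}(\eta_1')\,a^{1-\alpha}(\eta_2')\,\widetilde K(x_1',x_2').
\end{equation}
The resulting equation $\chi = \chi^\free + A\chi$ is precisely the Minkowski half-space integral equation of Sec.\ \ref{sec:aontestfunctions} with masses $m_1=m_2=0$ and interaction kernel $K_{\mathrm{eff}}$. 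Since $\|K_{\mathrm{eff}}\|<1$ is exactly the hypothesis \eqref{eq:ktildecondition}, and $\chi^\free\in\Banach_g$ by assumption, Theorem \ref{thm:minkhalfspace} yields a unique $\chi\in\Banach_g$, and then $\psi = a^{-3/2}(\eta_1)a^{-3/2}(\eta_2)\chi$ is the desired unique solution of \eqref{eq:inteqcurvedexplicit} in the stated class.

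The main thing to be careful about is the regularity of $K_{\mathrm{eff}}$ required to apply Thm.\ \ref{thm:minkhalfspace} (and hence Lemma \ref{thm:boundsa}): namely $K_{\mathrm{eff}}\in C^2$ together with the quantitative bound \eqref{eq:normk}. Given $\widetilde K\in C^2$ and differentiability of $a$, one must check that $a^{1-\alpha}$ and its relevant derivatives in $\eta$ are controlled up to $\eta=0$; for $0\le\alpha\le 1$ and $a(0)=0$ this is delicate only at the Big Bang. The uniform bound follows from the hypothesis, while the $C^2$ condition is either inherited from enough smoothness of $a$ or, if needed, enforced by mild additional regularity of $a$ away from $\eta=0$. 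Beyond this regularity check the argument is essentially algebraic: once the conformal factors are tracked correctly through the Green's function transformation law and the spacetime volume element, Thm.\ \ref{thm:flrw} is a corollary of Thm.\ \ref{thm:minkhalfspace}.
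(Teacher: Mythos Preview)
Your proposal is correct and follows essentially the same route as the paper: multiply \eqref{eq:inteqcurvedexplicit} by $a^{3/2}(\eta_1)a^{3/2}(\eta_2)$, pass to $\chi$ via \eqref{eq:psichi}, identify the effective kernel $a^{1-\alpha}(\eta_1')a^{1-\alpha}(\eta_2')\widetilde K$, and invoke Theorem~\ref{thm:minkhalfspace} in the massless case. Your caveat about the $C^2$ regularity of $K_{\mathrm{eff}}$ at $\eta=0$ is in fact more scrupulous than the paper's own proof, which simply applies Theorem~\ref{thm:minkhalfspace} without comment on this point.
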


\begin{proof}
	Multiplying \eqref{eq:inteqcurvedexplicit} with $a^{3/2}(\eta_1) a^{3/2}(\eta_2)$ and using the relation yields
	\begin{align}
	&\chi(\eta_1,\vx_1,\eta_2,\vx_2) = \chi^\free(\eta_1,\vx_1,\eta_2,\vx_2) +\int_0^\infty d \eta_1' \int d^3 \vx_1 \int_0^\infty d \eta_2'~a^{1-\alpha}(\eta_1') a^{1-\alpha}(\eta_2') \nonumber\\
	&\times  \, S_1^\ret(\eta_1-\eta_1', \vx_1-\vx_1') S_2^\ret(\eta_2-\eta_2',\vx_2-\vx_2') \, (\widetilde{K} \chi)(\eta_1',\vx_1',\eta_2',\vx_2').
\label{eq:inteqcurvedexplicit2}
\end{align}
This equation has the form of \eqref{eq:inteq} on $\frac{1}{2}\M$ with $K$ replaced by $a^{1-\alpha}(\eta_1') a^{1-\alpha}(\eta_2') \widetilde{K}$. Thus, using the same distributional understanding of the Green's functions as before, Thm.\ \ref{thm:minkhalfspace} yields the claim. \qed 
\end{proof}

\paragraph{Remarks.}
\begin{enumerate}
	\item Both $\psi^\free$ and $\psi$ have a singularity proportional to $a^{-3/2}(\eta_1)a^{-3/2}(\eta_2)$ for $\eta_1, \eta_2 \rightarrow 0$.
	\item For $\alpha < 1$, $\widetilde{K}$ has to compensate the singularities caused by $a^{-3/2}(\eta_1) a^{-3/2}(\eta_2)$ in order for \eqref{eq:ktildecondition} to hold. In the most natural case $\alpha = 1$, however, $\widetilde{K}$ only needs to satisfy $\| \widetilde{K} \| < 1$, i.e.\, the same condition as for $K$ in Thm.\ \ref{thm:minkhalfspace}. 
	\item Let $\chi^\free = a^{3/2}(\eta_1) a^{3/2}(\eta_2) \psi^\free$ be differentiable and let $\chi$ be the unique solution of \eqref{eq:inteqcurvedexplicit2}. Then, by \eqref{eq:inteqcurvedexplicit2}, we have:
	\be
		\chi^\free(0,\vx_1,0,\vx_2) = \chi(0,\vx_1,0,\vx_2),
	\ee
	i.e.\, $\chi$ satisfies a Cauchy problem "at the Big Bang".
	\item Remarkably, Thm.\ \ref{thm:flrw} covers a class of manifestly covariant, interacting integral equations in 1+3 dimensions. Then the interaction kernel $\widetilde{K}$ has to be covariant as well. A class of examples (see also \cite{int_eq_curved}) is given by $\alpha = 1$ and 
\be
	\widetilde{K}(x_1,x_2) = \left\{ \begin{array}{cl} f(d(x_1,x_2))& \text{if } x_1, x_2 \text{ are time-like related}\\ 0 &  \text{else}, \end{array} \right.
\ee
where $d(x_1,x_2) = (|\eta_1-\eta_2| -|\vx_1-\vx_2|) \int_0^1 d \tau \, a(\tau \eta_1 + (1-\tau) \eta_2)$ denotes the time-like distance of the spacetime points $x_1 = (\eta_1,\vx_1)$ and $x_2 = (\eta_2,\vx_2)$, and $f$ is an arbitrary smooth function which leads to $\| \widetilde{K} \| < 1$.

\end{enumerate}

\section{Proofs} \label{sec:proofs}

\subsection{Proof of Lemma \ref{thm:freesolutionsinbanach}}

\label{sec:freesolutionsinbanach}

Consider a solution $\psi$ of $D_i \psi^\free = 0,~i=1,2$ for compactly supported initial data at $x_1^0 = 0 = x_2^0$. As the Dirac equation has finite propagation speed, $\psi^\free$  is spatially compactly supported for all times. Without loss of generality we may assume \(\|\psi^\free (t_1,\cdot,t_2,\cdot)\|_{L^2(\mathbb{R}^6)}=1\) for all times \(t_1,t_2\), so it follows that also \([\psi^\free](t_1,t_2)=1\). In the following we will construct a sequence of test functions \((\psi_m)_{m\in\mathbb{N}}\) satisfying \(\psi_m \xrightarrow{\|\cdot\|_g ,~m\rightarrow \infty} \psi^\free\). Let \(\eta : \mathbb{R}\rightarrow \mathbb{R}\) be zero for arguments less than \(0\) and greater than \(1\) and in between given by (see also Fig. \ref{fig:eta})
\be
	\eta(t)=  \exp\left(-\frac{1}{t} \exp\left({\frac{1}{t-1}}\right)\right).
\ee
\begin{figure}
	\centering
	\begin{tikzpicture}
  \begin{axis}[
      xlabel={$t$}, 
      ylabel={$\eta$},
      samples=500,
      xmin=-0,
      xmax=1,
      ymin=-0.1,
      ymax=1.1,
      enlarge y limits=false,
      axis x line=middle,
      axis y line=middle,
      axis line style={-},
      width=10. cm,
      height=6. cm,
      yticklabel style={xshift=7.5mm},
      ylabel style={xshift=-3mm,yshift=5mm},
      xlabel style={xshift=5mm,yshift=-2.4mm},
      ticklabel style={font=\small}
    ]
    \addplot[black,thick,domain=-0.1:0.001](x,0);
    \addplot [black,thick,domain=0.001:0.999] (x, {pow(e,-1/x * pow(e,1/(x-1)))});
    \addplot[black,thick,domain=0.999:1.1] (x,1);
    \end{axis}
\end{tikzpicture}
	\caption{The function \( \eta(t)\).}
	\label{fig:eta}
\end{figure}
 Note that $\eta$ is smooth and monotonically increasing. Next, we define for every \(m\in\mathbb{N}\)
 \begin{equation}
 \psi^\free_m(t_1,\vx_1, t_1, \vx_2) ~:=~ e^{-\eta(t_1-m) (t_1-m)} e^{-\eta(t_2-m)(t_2-m)} \psi^\free(t_1,\vx_1, t_2, \vx_2).
 \end{equation}
 This function is smooth and decreases rapidly in all variables and thus lies in \(\mathscr{D}\). Now we estimate
  \(\|\psi^\free-\psi_m\|_g\). Pick \(n\in\mathbb{N}\). First consider \(\|\psi^\free -\psi_n\|_{L^2(\mathbb{R}^6)}(t_1,t_2)\). This function is identically zero for all \(t_1<n\) and \(t_2<n\), so we obtain the estimate
 \begin{align}
    & \sup_{t_1,t_2>0} \frac{1}{g(t_1)^2g(t_2)^2} \|\psi^\free- \psi_n\|^2_{L^2(\mathbb{R}^6)}\nonumber\\
     =&\sup_{t_1,t_2>0} \frac{1}{g(t_1)^2g(t_2)^2} \left|1-e^{-\eta(t_1-n)(t_1-n)}e^{-\eta(t_2-n)(t_2-n)}\right| ~\le~\frac{1}{g(n)^2}.
\end{align}
For the other terms we use that \(\psi^\free\) solves the free Dirac equation in each variable and that \(\sup_{t>0}\partial_{t} e^{-\eta(t)t}=:\alpha<\infty\) is realized for some positive value of \(t\). So we find for \(i\in \{0,1\}\):
 \begin{align}
& \sup_{t_1,t_2>0}\frac{1}{g(t_1)^2g(t_2)^2}\|D_i (\psi^\free-\psi_n)\|^2_{L^2(\mathbb{R}^6)}(t_1,t_2)\nonumber\\
= &\sup_{t_1,t_2>0}\frac{1}{g(t_1)^2g(t_2)^2}
 \|\gamma_{i}^0 \psi^\free(t_1,\cdot,t_2,\cdot) e^{-\eta(t_{3-i}-n)(t_{3-i}-n)}\partial_{t_i}e^{-\eta(t_i-n)(t_i-n)} \|^2_{L^2(\mathbb{R}^6)}~ \le~ \frac{\alpha}{g(n)^2}.
 \end{align}
For the first inequality it has been used that the factor with a derivative vanishes for \(t_i<n\). 

An analogous estimate repeated for the \(D_1D_2\)-term yields
\begin{equation*}
     \sup_{t_1,t_2>0} \frac{1}{g(t_1)^2g(t_2)^2}\|D_1D_2(\psi^\free - \psi_n)\|^2_{L^2(\mathbb{R}^6)}(t_1,t_2)~\le~ \frac{\alpha^2}{g(n)^4} ~\le~ \frac{\alpha^2}{g(n)^2}.
 \end{equation*}
 All in all, adding the estimates and taking the square root we find \(\|\psi^\free - \psi_n\|_g \le \frac{1+\alpha}{g(n)}\), which together with the asymptotic behavior of \(g\) implies convergence. It follows that the free solution \(\psi^\free\) can be approximated by Cauchy sequences in \(\mathscr{D}\) and hence is contained in \(\mathscr{B}_g\) which, we recall, has been defined as the completion of $\mathscr{D}$ with respect to $\| \cdot \|_g$. \qed

\subsection{Proof of Lemma \ref{thm:boundsa}} \label{sec:proofboundsa}

Throughout the following subsections, let $\psi \in \mathscr{D}$ and $K : \R^8 \rightarrow \CC$ be a smooth function. Furthermore define \(\delta:= 1-\|K\|^2>0, \mu=\text{max}\{m_1,m_2\}\) and let \(g\) be as in the statement of Thm. \ref{thm:minkhalfspace}.

We begin with some lemmas which are useful for estimating $[A\psi]^2(x_1^0,x_2^0)$.

\begin{lemma} \label{thm:estimatetensoroperators}
	Let the following operators be defined on $C([0,\infty))$:
	\begin{align}
		\big(\mathcal{A}^{(1)}(m) f\big)(t) ~&=~ t \int_0^{t} d \rho ~ (t-\rho)^2\,  f(\rho),\nonumber\\
		\big(\mathcal{A}^{(2)}(m) f \big)(t) ~&=~ \frac{m^4 t^4}{2^4 \,3^2} \int_0^{t} d\rho ~(t-\rho)^3 \, f(\rho),\nonumber\\
		\big(\mathcal{A}^{(3)}(m) f\big)(t) ~&=~ t^2 \, f(0),\nonumber\\
		\big(\mathcal{A}^{(4)}(m) f \big)(t) ~&=~ \frac{m^4 t^6}{2^2\, 3^2} \, f(0).
	\label{eq:defcurlyoperators}
	\end{align}
	Then, for $j=1,2$ and $k=1,2,3,4$, we define the operator $\mathcal{A}_j^{(k)}(m)$ acting on functions $\phi \in C([0,\infty)^2)$ by letting $\mathcal{A}^{(k)}(m)$ act on the $j$-th variable of $\phi(t_1,t_2)$.
Then we have for all $\psi \in \mathscr{D}$, all $m_1,m_2\geq 0$ and all $k,l=1,2,3,4$:
\be
		\left\| A_1^{(k)}(m_1) A_2^{(l)}(m_2)  \psi(t_1,\cdot,t_2,\cdot) \right\|^2_{L^2} ~\leq~ \mathcal{A}_j^{(k)}(m_1) \mathcal{A}_j^{(l)}(m_2)\, \|\psi(t_1,\cdot,t_2,\cdot)\|^2_{L^2}.\label{eq:aestimate}
	\ee
Here, it is understood that the operators $\mathcal{A}_j^{(k)}$ are applied to the functions defined by the norms which follow them, e.g.\, $\mathcal{A}_1^{(4)}(m_1)\, \|\psi(t_1,\cdot,t_2,\cdot)\|^2_{L^2} = \frac{m^4_1 t_1^6}{2^2\, 3^2} \, \|\psi(0,\cdot,t_2,\cdot)\|^2_{L^2}$.
\end{lemma}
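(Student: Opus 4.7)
The plan is to establish, for each $k=1,2,3,4$ separately, the one-variable estimate
\[
\|A^{(k)}(m)\varphi(t,\cdot)\|^2_{L^2(\R^3)} \;\le\; \mathcal{A}^{(k)}(m)\,\|\varphi(\cdot,\cdot)\|^2_{L^2(\R^3)}
\]
(with the right-hand side evaluated at $t$, applied to the function $\tau \mapsto \|\varphi(\tau,\cdot)\|^2_{L^2(\R^3)}$) for $\varphi\in\mathscr{S}(\tfrac12\M,\CC^4)$, and then to lift this bound to the tensor-product form \eqref{eq:aestimate} via Fubini. The lift uses crucially that $A_1^{(k)}(m_1)$ and $A_2^{(l)}(m_2)$ act on disjoint sets of variables and that $\mathcal{A}_1^{(k)}$ and $\mathcal{A}_2^{(l)}$ commute with the $\vx_2$- and $\vx_1$-integrations respectively.

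For the one-variable estimates, the common strategy is to apply Cauchy--Schwarz to the spatial integral defining $A^{(k)}$, then integrate in $\vx$ and use Fubini together with translation invariance to land on $\|\varphi(\tau,\cdot)\|^2_{L^2(\R^3)}$ with the shifted time argument. For $A^{(1)}$, Cauchy--Schwarz brings out the weight $\int_{B_t(0)} d^3\vy/|\vy|^2 = 4\pi t$, and the change of radial variable $\rho = t - |\vy|$ recovers the kernel $t(t-\rho)^2$ of $\mathcal{A}^{(1)}$. For $A^{(3)}$, Cauchy--Schwarz on the sphere contributes $4\pi t^2$, cancelling the $1/(4\pi)$ and $1/t^2$ prefactors to leave $t^2\,\|\varphi(0,\cdot)\|^2_{L^2}$. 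The operators $A^{(2)}$ and $A^{(4)}$ both contain the Bessel kernel $J_1(m\sqrt{y^2})/\sqrt{y^2}$; here the classical bound $|J_1(u)/u|\le \tfrac12$ for $u\ge 0$ majorizes the kernel by $m/2$, and a subsequent Cauchy--Schwarz brings in the four-dimensional volume $\pi t^4/3$ of the truncated past cone for $A^{(2)}$ and the three-ball volume $4\pi t^3/3$ for $A^{(4)}$, yielding the coefficients $m^4/(2^4\cdot 3^2)$ and $m^4/(2^2 \cdot 3^2)$ appearing in $\mathcal{A}^{(2)}$ and $\mathcal{A}^{(4)}$.

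For the two-variable extension, I would expand $\|A_1^{(k)}(m_1)A_2^{(l)}(m_2)\psi(t_1,\cdot,t_2,\cdot)\|^2_{L^2(\R^6)}$ as an iterated $\vx_1,\vx_2$ integral, apply the one-variable estimate to the inner $\vx_1$-integral with $(t_2,\vx_2)$ held as parameters, interchange the outer $\vx_2$-integration with the operator $\mathcal{A}_1^{(k)}(m_1)$ (which acts only on $t_1$), and finally apply the one-variable estimate a second time to the resulting $\vx_2$-integral of $|A_2^{(l)}(m_2)\psi|^2$. The main obstacle is not conceptual but computational: beyond the Bessel bound $|J_1(u)/u|\le 1/2$, the delicate step is the bookkeeping of the numerical constants, since one must verify that the geometric volumes combine with the $1/(4\pi)^2$ prefactors and the Bessel bound to reproduce $\mathcal{A}^{(k)}$ on the nose.
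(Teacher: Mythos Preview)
Your proposal is correct and rests on exactly the same ingredients as the paper's proof: Cauchy--Schwarz in the integration variables, the Bessel bound $|J_1(u)/u|\le\tfrac12$, Fubini together with translation invariance in the spatial variables, and the explicit computation of the geometric volumes $4\pi t$, $4\pi t^2$, $\pi t^4/3$, $4\pi t^3/3$. The only difference is organizational: you factor the argument into four one-variable estimates and then lift to the product via Fubini and positivity of the $\mathcal{A}^{(k)}$, whereas the paper treats the two variables simultaneously and writes out two representative pairs $(k,l)=(1,2)$ and $(3,4)$ in full; your modular route is slightly cleaner since it reduces sixteen cases to four, but the substance is identical.
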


\begin{proof}
	We prove \eqref{eq:aestimate} for $k=1$, $ l=2$ and $k=3$, $l=4$. The remaining cases can be treated in the same way. We begin with $k=1$, $l=2$, using $|J_1(x)/x| \leq \frac{1}{2}$:
\begin{align}
	&\| A_1^{(1)}(m_1) A_2^{(2)}(m_2)\, \psi(x_1^0,\cdot,x_2^0,\cdot) \|^2_{L^2} ~= \frac{m_2^2}{(4\pi)^4} \int_{\R^3 \times \R^3} \!\!\! d^3 \vx_1 \, d^3 \vx_2 \nonumber\\
&~~~~~\times \left| \int_{B_{x_1^0}(0)} \!\!\! d^3\vy_1 \int_{-x_2^0}^0 dy_2^0 \int_{B_{|y_2^0|}(0)} \!\!\! d^3 \vy_2 ~\frac{1}{|\vy_1|}  \frac{J_1(m_2\sqrt{y_2^2})}{\sqrt{y_2^2}}\psi(x_1+y_1,x_2+y_2)|_{y^0_1=-|\vy_1|} \right|^2\nonumber\\
&\leq~ \frac{m_2^2}{(4\pi)^4} \int_{\R^3 \times \R^3}  \!\!\! d^3 \vx_1 \, d^3 \vx_2 \left( \int_{B_{x_1^0}(0)} \!\!\! d^3\vy_1 \int_{-x_2^0}^0 dy_2^0 \int_{B_{|y_2^0|}(0)} \!\!\! d^3 \vy_2~ \frac{1}{|\vy_1|^2} \left| \frac{J_1(m_2\sqrt{y_2^2})}{\sqrt{y_2^2}} \right|^2 \right)\nonumber\\
&~~~~~\times \left( \int_{B_{x_1^0}(0)} \!\!\! d^3\vy_1 \int_{-x_2^0}^0 dy_2^0 \int_{B_{|y_2^0|}(0)} \!\!\! d^3 \vy_2 ~|\psi|^2(x_1+y_1,x_2+y_2)|_{y^0_1=-|\vy_1|} \right)\nonumber\\
&\leq~  \frac{m_2^2}{(4\pi)^4} \int_{\R^3 \times \R^3}  \!\!\! d^3 \vx_1 \, d^3 \vx_2 ~4\pi x_1^0 \left( \frac{\pi m_2^2 (x_2^0)^4}{12}\right) \nonumber\\
&~~~~~\times \left( \int_{B_{x_1^0}(0)} \!\!\! d^3\vy_1 \int_{-x_2^0}^0 dy_2^0 \int_{B_{|y_2^0|}(0)} \!\!\! d^3 \vy_2 ~|\psi|^2(x_1+y_1,x_2+y_2)|_{y^0_1=-|\vy_1|} \right)\nonumber\\
&\leq~  \frac{m_2^4 \, x_1^0 \,  (x_2^0)^4}{3 \pi^2\, 2^8 } \int_{\R^3 \times \R^3}  \!\!\! d^3 \vx_1 \, d^3 \vx_2  \int_{B_{x_1^0}(0)} \!\!\! d^3\vy_1 \int_{-x_2^0}^0 dy_2^0 \int_{B_{|y_2^0|}(0)} \!\!\! d^3 \vy_2\nonumber\\
&~~~~~\times |\psi|^2(x_1^0 - |\vy_1|,\vx_1+\vy_1,x_2^0 + y_2^0,\vx_2+\vy_2).
\label{eq:a1a2estimate1}
\end{align}
Exchanging the $x$ and $y$ integrals yields:
\begin{align}
&\eqref{eq:a1a2estimate1} \leq  \frac{m_2^4 \, x_1^0\,  (x_2^0)^4}{3 \pi^2\, 2^8 } \int_{B_{x_1^0}(0)} \!\!\! d^3\vy_1 \int_{-x_2^0}^0 dy_2^0 \int_{B_{|y_2^0|}(0)} \!\!\! d^3 \vy_2 ~\| \psi(x_1^0 - |\vy_1|,\cdot,x_2^0 + y_2^0,\cdot)\|_{L^2} \nonumber\\
&\leq~  \frac{m_2^4 \, x_1^0\,  (x_2^0)^4}{3 \pi^2\, 2^8 } \, 4\pi \int_{0}^{x_1^0} \!\!\! d r_1 ~r_1^2 \int_{-x_2^0}^0 dy_2^0 ~\frac{4\pi}{3} |y_2^0|^3 \, \| \psi(x_1^0 - |\vy_1|,\cdot,x_2^0 + y_2^0,\cdot)\|_{L^2} \nonumber\\
&\leq~  \frac{m_2^4 \, x_1^0 \,  (x_2^0)^4}{2^4 \, 3^2} \int_{0}^{x_1^0} \!\!\! d \rho_1 ~(x_1^0-\rho_1)^2 \int_{0}^{x_2^0} d\rho_2 ~(x_2^0-\rho_2)^3 \, \| \psi(\rho_1,\cdot,\rho_2,\cdot)\|_{L^2} \nonumber\\
&=~ \mathcal{A}_1^{(1)}(m_1) \mathcal{A}_2^{(2)}(m_2)\, \|\psi(x_1^0,\cdot,x_2^0,\cdot)\|^2_{L^2}.
\end{align}

Next, we turn to the case $k=3, l=4$. Using that the modulus of the largest eigenvalue of $\gamma^0$ is 1, we obtain:
\begin{align}
	&\| A_1^{(3)}(m_1) A_2^{(4)}(m_2)\, \psi(x^0_1,\cdot,x_2^0,\cdot) \|^2_{L^2} ~\leq~ \frac{m_2^2}{(4\pi)^4 (x_1^0)^2} \int_{\R^3\times \R^3} \!\!\! d^3 \vx_1 \, d^3 \vx_1\nonumber\\
 &~~~~~\times \left|\int_{\partial B_{x_1^0}(0)} \!\!\! d \sigma(\vy_1) \int_{B_{x_2^0}(0)} \!\!\! d^3 \vy_2 ~\frac{J_1\left( m_2\sqrt{(x_2^0)^2-\vy_2^2}\right)}{\sqrt{(x_2^0)^2-\vy_2^2}} |\psi|(0,\vx_1+\vy_2,0,\vx_2+\vy_2) \right|^2\nonumber\\
&\leq~ \frac{m_2^4}{(4\pi)^4 (x_1^0)^2} \int_{\R^3\times \R^3} \!\!\! d^3 \vx_1 \, d^3 \vx_1 \left(\int_{\partial B_{x_1^0}(0)} \!\!\! d \sigma(\vy_1) \int_{B_{x_2^0}(0)} \!\!\! d^3 \vy_2 \left|\frac{J_1\left( m_2\sqrt{(x_2^0)^2-\vy_2^2}\right)}{m_2\sqrt{(x_2^0)^2-\vy_2^2}}\right|^2 \right)\nonumber\\
&~~~~~ \times \left( \int_{\partial B_{x_1^0}(0)} \!\!\! d \sigma(\vy_1) \int_{\partial B_{x_2^0}(0)} \!\!\! d \sigma(\vy_2) ~|\psi|^2(0,\vx_1+\vy_2,0,\vx_2+\vy_2)\right)\nonumber\\
&=~ \frac{m_2^4}{(4\pi)^4 (x_1^0)^2} \, 4\pi (x_1^0)^2 \, \frac{\pi (x_2^0)^3}{3} \int_{\R^3\times \R^3} \!\!\! d^3 \vx_1 \, d^3 \vx_1 \int_{\partial B_{x_1^0}(0)} \!\!\! d \sigma(\vy_1) \int_{B_{x_2^0}(0)} \!\!\! d^3 \vy_2 \nonumber\\
&~~~~~\times ~|\psi|^2(0,\vx_1+\vy_2,0,\vx_2+\vy_2).
\label{eq:a3a4estimate1}
\end{align}
Exchanging the order of the $x$ and $y$ integrals yields:
\begin{align}
	\eqref{eq:a3a4estimate1} ~&=~ \frac{m_2^4}{(4\pi)^4} \pi (x_2^0)^2\int_{\partial B_{x_1^0}(0)} \!\!\! d \sigma(\vy_1) \int_{B_{x_2^0}(0)} \!\!\! d^3 \vy_2 ~\|\psi(0,\cdot,0,\cdot)\|^2_{L^2}\nonumber\\
&=~ \frac{m_2^4 \, (x_1^0)^2 \, (x_2^0)^6}{2^2 \, 3^2} \, \|\psi(0,\cdot,0,\cdot)\|^2_{L^2}\nonumber\\
&= ~ \mathcal{A}_1^{(3)}(m_1) \mathcal{A}_2^{(4)}(m_2) \,  \|\psi(x_1^0,\cdot,x_2^0,\cdot)\|^2_{L^2}.
\end{align} \qed
\end{proof}

\begin{lemma} \label{thm:spatialnormestimates}
	For $j=1,2$ let $\mathcal{A}_j(m) = \sum_{k=1}^4 \mathcal{A}_j^{(k)}(m)$. Then the following estimates hold:
\begin{align}
		& \!\| (A \psi)(x_1^0,\cdot,x_2^0,\cdot)\|^2_{L^2} ~\leq ~64 \, \|K\|^2 \mathcal{A}_1(m_1) \mathcal{A}_2(m_2) \, [\psi]^2(x_1^0,x_2^0),
	\label{eq:apsiestimate}\\
		&\! \| (D_1(A \psi))(x_1^0,\cdot,x_2^0,\cdot) \|^2_{L^2}~\leq~8 \, \| K \|^2 \, \mathcal{A}_2(m_2) \, [\psi](x_1^0,x_2^0),
	\label{eq:d1apsiestimate}\\
		&\! \| (D_2(A \psi))(x_1^0,\cdot,x_2^0,\cdot) \|^2_{L^2} ~\leq~8 \, \| K \|^2 \, \mathcal{A}_1(m_1) \, [\psi](x_1^0,x_2^0),
	\label{eq:d2apsiestimate}\\
		&\! \| (D_1 D_2(A \psi))(x_1^0,\cdot,x_2^0,\cdot) \|^2_{L^2} ~\leq~ \| K \|^2 \, [\psi]^2(x_1^0,x_2^0),
	\label{eq:d1d2apsiestimate}
	\end{align}
	where $[\psi]^2(x_1^0,x_2^0)$ is regarded as a function of $x_1^0,x_2^0$ to which the operators in front of it are applied.
\end{lemma}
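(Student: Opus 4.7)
The heart of the argument is a pair of collapse identities coming from the Green's-function relations $D_j S_j = \delta^{(4)}$. Working with $\psi\in\mathscr{D}$, I would first show that on test functions the integrated-by-parts $A$ defined by \eqref{eq:defa} agrees with its unintegrated form $A\psi = \int_{(\tfrac{1}{2}\M)^2} S_1(x_1-x_1')S_2(x_2-x_2')\, K\psi\, dx_1' dx_2'$, so that differentiation yields $D_1 D_2(A\psi) = K\psi$ and $D_1(A\psi) = \int S_2(x_2-x_2')\, K(x_1,x_2')\psi(x_1,x_2')\, dx_2'$ (and analogously for $D_2 A\psi$). A re-integration by parts of the remaining single-variable integrals then gives
\begin{equation*}
D_1(A\psi) = \sum_{l=1}^{4} A_2^{(l)}(m_2)[\mathrm{arg}_l], \qquad D_2(A\psi) = \sum_{k=1}^{4} A_1^{(k)}(m_1)[\mathrm{arg}_k'],
\end{equation*}
with $\mathrm{arg}_l = D_2(K\psi)$ for $l\in\{1,2\}$ and $\mathrm{arg}_l = K\psi|_{x_2^0=0}$ for $l\in\{3,4\}$, and analogously for $\mathrm{arg}_k'$. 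Similarly, $A\psi$ itself is the 16-fold sum $\sum_{k,l=1}^4 A_1^{(k)}(m_1) A_2^{(l)}(m_2)[\mathrm{arg}_{k,l}]$, the arguments being $D_1 D_2(K\psi)$ in the bulk--bulk case, $D_j(K\psi)$ with the other time coordinate set to $0$ in the mixed cases, and $K\psi$ with both time coordinates set to $0$ in the boundary--boundary case.

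Given these decompositions, each of the four estimates follows a uniform recipe: (i) a finite Cauchy--Schwarz $|\sum_{i=1}^N a_i|^2\le N\sum_i |a_i|^2$ on the sum over operator indices, producing factors $N=1,4,4,16$ for \eqref{eq:d1d2apsiestimate}, \eqref{eq:d1apsiestimate}, \eqref{eq:d2apsiestimate}, and \eqref{eq:apsiestimate}; (ii) Lemma~\ref{thm:estimatetensoroperators} applied termwise, transferring each $\|A_1^{(k)} A_2^{(l)}[\cdot]\|_{L^2}^2$ onto $\mathcal{A}_1^{(k)}(m_1)\mathcal{A}_2^{(l)}(m_2)\|\cdot\|_{L^2}^2$; and (iii) pointwise product-rule bounds
\begin{align*}
D_j(K\psi) &= K D_j\psi + i(\slashed{\partial}_j K)\psi,\\
D_1 D_2(K\psi) &= K\, D_1 D_2\psi + i(\slashed{\partial}_1 K) D_2\psi + i(\slashed{\partial}_2 K) D_1\psi - (\slashed{\partial}_1 \slashed{\partial}_2 K)\psi,
\end{align*}
which, combined with the defining supremum $\|K\|$ in \eqref{eq:normk} and a further $|a+b|^2\le 2(|a|^2+|b|^2)$, yield $\|K\psi\|_{L^2}^2\le\|K\|^2[\psi]^2$, $\|D_j(K\psi)\|_{L^2}^2\le 2\|K\|^2[\psi]^2$, and $\|D_1 D_2(K\psi)\|_{L^2}^2\le 4\|K\|^2[\psi]^2$.

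Assembling the factors: \eqref{eq:d1d2apsiestimate} is immediate with $N=1$ and product-rule factor $1$; \eqref{eq:d1apsiestimate} and \eqref{eq:d2apsiestimate} combine $N=4$ with the worst-case bulk argument $D_j(K\psi)$ contributing factor $2$, for $4\cdot 2 = 8$; and \eqref{eq:apsiestimate} combines $N=16$ with the worst-case bulk--bulk argument $D_1 D_2(K\psi)$ contributing factor $4$, for $16\cdot 4 = 64$. In all cases, the leftover sums $\sum_l \mathcal{A}_2^{(l)}(m_2)=\mathcal{A}_2(m_2)$ and $\sum_{k,l}\mathcal{A}_1^{(k)}(m_1)\mathcal{A}_2^{(l)}(m_2)=\mathcal{A}_1(m_1)\mathcal{A}_2(m_2)$ reassemble cleanly. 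I expect the main obstacle to be the collapse identities themselves: since \eqref{eq:defa} is already in integrated-by-parts form, one must verify, term by term, that the sixteen contributions to $A\psi$ regroup correctly once the $D_j$-derivatives act on the four types of operators $A_j^{(k)}$---equivalently, that one may return to the unintegrated form on $\mathscr{D}$, use $D_j S_j = \delta^{(4)}$ there, and then re-integrate by parts only in the remaining variable. Once this bookkeeping, which closely mirrors Section~\ref{sec:aontestfunctions}, is carried out, the rest of the proof consists only of the routine constant-counting above.
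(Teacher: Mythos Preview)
Your proposal is correct and follows essentially the same approach as the paper: the collapse identities $D_1D_2(A\psi)=K\psi$ and $D_j(A\psi)=\int S_{3-j}(K\psi)$, re-expression via the $A_j^{(k)}$, pointwise product-rule bounds controlled by $\|K\|$, Young's inequality for the square of a finite sum, and Lemma~\ref{thm:estimatetensoroperators}. The only cosmetic difference is the order of the two Cauchy--Schwarz steps---the paper first bounds $|D_1D_2(K\psi)|\le\|K\|\sum_{k=0}^3|\mathcal{D}_k\psi|$ pointwise and then squares the full $4\times4\times4=64$-term sum, whereas you split off $N=16$ operator terms first and then pick up the product-rule factor $4$---but both routes produce the identical constants $1,8,8,64$.
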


\begin{proof}
	We start with \eqref{eq:apsiestimate}. Recalling \eqref{eq:defa}, the expression $A\psi$ contains terms such as $D_1 D_2 (K \psi)$ and $D_i(K\psi)$, $i=1,2$. Recalling also the definition of $\mathcal{D}_k$ (Eq.\ \eqref{eq:defdk}), we have:
\be
	D_1 D_2 (K\psi) = \sum_{k=0}^3 (\nabla_{3-k}K)(\mathcal{D}_k \psi)
\ee
with
\be
\nabla_k := \left\{\begin{array}{cl} 1, & k=0\\ i\slashed{\partial}_1, & k=1 \\ i \slashed{\partial}_2, & k=2 \\ -\slashed{\partial}_1\slashed{\partial}_2, & k=3. \end{array}\right.
\ee
Hence, noting \eqref{eq:normk}:
\be
	|D_1D_2 \psi| \leq \| K \| \sum_{k=0}^3 |\mathcal{D}_k \psi|.
\ee
Similarly, we find:
\be
	D_i (K\psi) \leq \| K\| \sum_{k=0}^3 |\mathcal{D}_k \psi|,~~i=1,2.
\ee
Considering the definition of $A_j^{(k)}(m),$ $j=1,2$, $k=1,2,3,4$ it follows that
\begin{align}
	|A\psi| ~\leq ~\|K\| \sum_{k=0}^3  & \prod_{j=1,2} \left( A_j(m_j)^{(1)} + A_j^{(2)}(m_j) + A_j^{(3)}(m_j) + A_j^{(4)}(m_j)\right)  |\mathcal{D}_k \psi|.
\end{align}
In slight abuse of notation, we here use the same symbols for the operators $A_j^{(k)}(m)$ acting on functions with and without spin components.

The idea now is to make use of lemma lemma \ref{thm:estimatetensoroperators}. In order to be able to apply the lemma, we first note that by Young's inequality for $a_1,...,a_N \in \R$, we have $\left(\sum_{i=1}^N a_i\right)^2 \leq N \sum_{i=1}^N a_i^2$ and thus:
\be
	|A\psi (x_1,x_2)|^2 ~\leq~ 64\, \|K\|^2 \sum_{i,j=1}^4 \sum_{k=0}^3 \big| A_1^{(i)}(m_1) A_2^{(j)}(m_2) \, |\mathcal{D}_k \psi|\big|^2.
\ee
Integrating over this expression and using lemma \ref{thm:estimatetensoroperators}, we obtain:
\be
		\| (A \psi)(x_1^0,\cdot,x_2^0,\cdot)\|^2_{L^2} ~\leq ~64 \, \|K\|^2\sum_{i,j=1}^4 \sum_{k=0}^3\mathcal{A}_1^{(i)}(m_1) \mathcal{A}_2^{(j)}(m_2) \, \| (\mathcal{D}_k \psi)(x_1^0,\cdot,x_2^0,\cdot) \|^2_{L^2}.
\ee	
Recalling the definition of $[\psi]^2(x_1^0,x_2^0)$, Eq.\ \eqref{eq:spatialnorm} yields \eqref{eq:apsiestimate}.

Next, we turn to \eqref{eq:d1apsiestimate}. We start from the initial form of the integral equation \eqref{eq:inteq} and use that as a distributional identity on test functions $\psi \in \mathscr{D}_T$, we have $D_1 S^\ret(x_1-x_1') = \delta^{(4)}(x_1-x_1')$. Thus, we obtain:
\be
	(D_1 A \psi)(x_1,x_2) = \int_{\tfrac{1}{2} \M} d^4 x_2' ~S_2^\ret(x_2-x_2') (K \psi)(x_1,x_2').
\ee
Proceeding similarly as for \eqref{eq:defa} we rewrite this as:
\be
	D_1 (A \psi) ~=~ \left( A_2^{(1)}(m_2)\, D_2 + A_2^{(2)}(m_2) D_2 + A_2^{(3)}(m_2) + A_2^{(4)}(m_2)\right) (K\psi).
\ee
Considering the form of $A_j^{(k)}(m_j)$ this implies:
\be
	|D_1 (A \psi)| ~\leq~  \| K \| \sum_{i=1}^4 \sum_{k\in \{ 0,2\}} A_2^{(i)}(m_2)\, |\mathcal{D}_k \psi|.
\ee
We now square and use Young's inequality, finding:
\be
	|D_1 (A \psi)|^2 ~\leq~  8 \, \| K \|^2 \sum_{i=1}^4 \sum_{k\in \{ 0,2\}} A_2^{(i)}(m_2) \, | \mathcal{D}_k \psi|^2.
\ee
Integrating and using lemma \ref{thm:estimatetensoroperators} yields:
\begin{align}
	\| D_1 (A \psi)(x_1^0,\cdot,x_2^0,\cdot)\|^2_{L^2} ~\leq~ 8 \, \| K \|^2 \sum_{i=1}^4 \sum_{k\in \{ 0,2\}} \mathcal{A}_2^{(i)}(m_2)\, \| (\mathcal{D}_k \psi)(x_1^0,\cdot,x_2^0,\cdot) \|^2_{L^2}.
\end{align}
Adding the terms with $k=1,3$ and using the definition of $[\psi]^2(x_1^0,x_2^0)$ gives us \eqref{eq:d1apsiestimate}.

The estimate \eqref{eq:d2apsiestimate} follows in an analogous way.

Finally, for \eqref{eq:d1d2apsiestimate} we also start from the initial integral equation \eqref{eq:inteq} and use $D_i S_i^\ret(x_i-x_i') = \delta^{(4)}(x_i-x_i')$. This results in:
\be
	D_1 D_2 (A \psi) = K \psi.
\ee
Squaring and integrating gives us:
\be
	\| D_1 D_2 (A \psi)(x_1^0,\cdot,x_2^0,\cdot) \|^2 ~ \leq ~\| K \|^2 \, \| \psi(x_1^0,\cdot,x_2^0,\cdot) \|^2_{L^2} ~\leq~ \| K \|^2 \, [\psi]^2(x_1^0,x_2^0),
\ee
which yields \eqref{eq:d1d2apsiestimate}. \qed
\end{proof}

These estimates are the core of:

\begin{proof}[Proof of Lemma\ \ref{thm:boundsa}:]
	
	We use lemma \ref{thm:spatialnormestimates} together with the definition of $[\psi]^2(x_1^0,x_2^0)$ to obtain:
	\be
		[A \psi]^2(x_1^0,x_2^0) ~\leq~ \eqref{eq:apsiestimate} + \eqref{eq:d1apsiestimate}+ \eqref{eq:d2apsiestimate}+ \eqref{eq:d1d2apsiestimate}.
	\ee
	Summarizing the operators into a product yields \eqref{eq:spatialnormapsi}.
\qed
\end{proof}

\subsection{Proof of Theorem \ref{thm:minkhalfspace}} \label{sec:proofminkhalfspace}

In order to prove Thm. \ref{thm:minkhalfspace}, we combine the previous estimates to show that \(\|A\|<1\), first on test functions $\psi \in \mathscr{D}$ and by linear extension also on the whole of $\Banach_g$ . We start with Eq. \eqref{eq:spatialnormapsi} of Thm. \ref{sec:minkhalfspace} using the definition of \(\mathcal{A}_j\) for \(j=1,2\), as well as the following estimate, valid for all \(\psi \in \mathscr{D}, t_1,t_2>0\):
\be 
	[\psi](t_1,t_2) ~=~ [\psi](t_1,t_2) \, \frac{g(t_1)g(t_2)}{g(t_1)g(t_2)} ~\le~ \|\psi\|_g\, g(t_1)g(t_2).
\ee 
Using this in \eqref{eq:spatialnormapsi} yields:
\begin{align}
\|A\psi\|^2_g ~&\le~ \sup_{x_1^0,x_2^0>0}\frac{1}{(g(x_1^0)g(x_2^0))^2} \,  \|K\|^2 \prod_{j=1,2} \left( \id + 8\mathcal{A}_j(m_j) \right) [\psi]^2(x_1^0,x_2^0),\\
&\le~ \sup_{x_1^0,x_2^0>0} \frac{\|\psi\|^2}{(g(x_1^0)g(x_2^0))^2} \,  \|K\|^2 \prod_{j=1,2} \left( \id + 8\mathcal{A}_j(m_j) \right) \, (g^2\otimes g^2)(x_1^0,x_2^0),\\
&\le~ \|K\|^2\, \|\psi\|^2_g \left( \sup_{t>0}\frac{1}{g(t)^2} \left( \id + 8\mathcal{A}(\mu) \right)g^2 (t) \right)^2,
\label{est:NormA1}
\end{align}
where $\mu = \max \{ m_1,m_2\}$ and $\mathcal{A}(\mu) = \sum_{k=1}^4 \mathcal{A}^{(k)}(\mu)$ with $ \mathcal{A}^{(k)}(\mu)$ as in \eqref{eq:defcurlyoperators}.

Next, we shall estimate the term in the big round bracket. To this end, we first note some special properties of \(g^2\), which motivated choosing $g$ as in \eqref{eq:defg}.
\begin{lemma}\label{lem:int}
 For all \(t>0\), we have
\be 
	\int_0^t d\tau \, g^2(\tau) = \frac{t}{1+b t^8} \, g^2(t).
\ee 
\end{lemma}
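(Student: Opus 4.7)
The proof will be a direct verification using the fundamental theorem of calculus, made transparent by a judicious simplification of the right-hand side. Specifically, I would first substitute the definition of $g$ from \eqref{eq:defg} to obtain
\[
g^2(t) ~=~ (1+bt^8)\exp(bt^8/8),
\]
so that the factor $1+bt^8$ in the numerator cancels the denominator $1+bt^8$ on the right-hand side. The claimed identity is therefore equivalent to
\[
\int_0^t d\tau \, g^2(\tau) ~=~ t\exp(bt^8/8).
\]

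Next, I would verify this by differentiating the right-hand side. A direct computation gives
\[
\frac{d}{dt}\bigl[t\exp(bt^8/8)\bigr] ~=~ \exp(bt^8/8) + t\cdot \exp(bt^8/8)\cdot bt^7 ~=~ (1+bt^8)\exp(bt^8/8) ~=~ g^2(t),
\]
and at $t=0$ the right-hand side vanishes, matching the lower endpoint of the integral. Since $g^2$ is continuous on $[0,\infty)$, the fundamental theorem of calculus then yields the claim.

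There is no real obstacle here; the lemma is essentially a bookkeeping identity explaining \emph{why} the weight $g$ in \eqref{eq:defg} was chosen in exactly this form — the factor $\sqrt{1+bt^8}$ is engineered so that the antiderivative of $g^2$ factors as $t$ times $\exp(bt^8/8)$, and the denominator $1+bt^8$ in the stated formula is precisely what restores this antiderivative to a multiple of $g^2$ itself. This identity is what will later make the contraction estimate close, since it allows bounds of the form $\int_0^t g^2$ to be controlled pointwise by $g^2(t)$ with a decaying prefactor $t/(1+bt^8)$.
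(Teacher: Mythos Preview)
Your proof is correct and follows exactly the same approach as the paper: differentiate the right-hand side, verify that the result is $g^2(t)$, and check that the right-hand side vanishes at $t=0$. Your version simply makes the cancellation of the factor $1+bt^8$ explicit before differentiating, which is a helpful clarification but not a substantively different argument.
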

\begin{proof}[Proof:] Differentiating the right side of the equation and using the concrete function \(g^2\) as in \eqref{eq:defg} shows that it is, indeed, the anti-derivative of \(g^2\). Since this function vanishes at \(t=0\), the claim follows. \qed
\end{proof}

\begin{lemma}\label{lem:someidentities}
For $c<8$ we have
\be
	\sup_{t>0} \frac{t^c}{1+b t^8}  ~=~\frac{c}{8}\,  b^{-c/8} \left( \frac{8}{c}-1\right)^{1-c/8},
	\label{eq:firstequality}
\ee
and furthermore for $c=8$:
\be
	\sup_{t>0} \frac{t^8}{1+b t^8} ~=~\frac{1}{b}.
	\label{eq:secondequality}
\ee
\end{lemma}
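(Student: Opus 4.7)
The plan is to reduce both identities to a one-variable calculus exercise on $f_c(t) := t^c/(1+bt^8)$, $t > 0$. For $c < 8$, I would differentiate directly:
\[
f_c'(t) \;=\; \frac{t^{c-1}\bigl[\,c - (8-c)\,b\,t^8\,\bigr]}{(1+bt^8)^2}.
\]
Since $0 < c < 8$ and $b > 0$, the bracket has a unique zero on $(0,\infty)$, namely
\[
t_* \;=\; \Bigl(\tfrac{c}{(8-c)\,b}\Bigr)^{1/8},
\]
and is positive for $t < t_*$ and negative for $t > t_*$. Together with the boundary behavior $f_c(0) = 0$ and $f_c(t) \to 0$ as $t \to \infty$ (which uses $c < 8$), this establishes that $t_*$ is the unique global maximizer.

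To finish, I would substitute $bt_*^8 = c/(8-c)$, so that $1 + bt_*^8 = 8/(8-c)$ and
\[
t_*^c \;=\; \bigl(bt_*^8\bigr)^{c/8} b^{-c/8} \;=\; b^{-c/8}\Bigl(\tfrac{8}{c}-1\Bigr)^{-c/8},
\]
where I rewrote $c/(8-c) = 1/(8/c - 1)$. Dividing by $1 + bt_*^8$ yields the claimed closed-form expression after routine algebra.

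For the case $c = 8$, no differentiation is needed. I would simply rewrite
\[
\frac{t^8}{1+bt^8} \;=\; \frac{1}{b}\,\Bigl(1 - \frac{1}{1+bt^8}\Bigr),
\]
observe that this is strictly monotonically increasing in $t$ on $(0,\infty)$, and note that $1/(1+bt^8) \to 0$ as $t \to \infty$. Hence the supremum equals $1/b$ (attained only in the limit).

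There is no substantive obstacle; the only point requiring mild care is ensuring that $t_*$ is a \emph{global} rather than merely local maximum in the $c<8$ case, which is why I would explicitly record the two boundary values and the sign analysis of $f_c'$ before reading off the value $f_c(t_*)$.
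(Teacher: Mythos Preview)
Your proposal is correct and follows essentially the same approach as the paper: for $c<8$, locate the unique critical point of $h(t)=t^c/(1+bt^8)$ and substitute it back (the paper records the maximizer as $t_*=b^{-1/8}(8/c-1)^{-1/8}$, which coincides with your $t_*=(c/((8-c)b))^{1/8}$), and for $c=8$, rewrite the fraction to read off the bound $1/b$ directly. Your write-up is more explicit about the derivative, the sign change, and the boundary behavior establishing that the critical point is a global maximum, but the method is the same as the paper's.
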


\begin{proof}
To prove \eqref{eq:firstequality}, considering the shape of the function $h(t)= t^c/(1+b t^8)$ we find that the supremum is in fact a maximum which is located at \(t = b^{-1/8} \left(8/c-1\right)^{-1/8}\). Inserting this back into the function $h(t)$ yields \eqref{eq:firstequality}.  \eqref{eq:secondequality} follows from $\frac{t^8}{1+b t^8} = \frac{1}{b} \frac{1}{1/(b t^8) +1} \leq \frac{1}{b}$. \qed
\end{proof}

\paragraph{Proof of Thm. \ref{thm:minkhalfspace}:} Applying Lemma \ref{lem:int} to \(\mathcal{A}(\mu) \,g^2\) yields:
\begin{align}
    \big(\mathcal{A}^{(1)}(\mu) \, g^2\big)(t)&~=~ t \int_0^t d\rho \, (t-\rho)^2\, g^2(\rho)~\le~ t^3 \int_0^t d\rho \,g^2(\rho) = \frac{t^4}{1+b t^8} \,g^2(t),\nonumber\\
     \big(\mathcal{A}^{(2)}(\mu) \, g^2\big)(t)&~=~ \frac{\mu^4\, t^4}{2^4\, 3^2} \int_0^t d\rho\, (t-\rho)^3\, g^2(\rho)~\le~ \frac{\mu^4\, t^8}{2^4 \,3^2} \frac{g^2(t)}{1+b t^8},\nonumber\\
    \big( \mathcal{A}^{(3)}(\mu) \, g^2\big) (t) &~=~ t^2,\nonumber\\
    \big( \mathcal{A}^{(4)}(\mu) \, g^2 \big)(t) &~=~ \frac{\mu^4\, t^6}{2^2\, 3^2}.
\end{align}
Multiplying with \(1/g^2(t)\) and using Lemma \ref{lem:someidentities} as well as $1/g(t)^2 \le (1+b t^8)^{-1}$, we find:
\begin{align}
    g^{-2}(t)\, \big(\mathcal{A}^{(1)}(\mu) \,g^2\big)(t) & ~\le~ \sqrt{2} \, b^{-\frac{1}{2}},\nonumber\\
    g^{-2}(t)\, \big(\mathcal{A}^{(2)}(\mu) \,g^2\big)(t) & ~\le~ \frac{\mu^4}{2^4 \,3^2 \,b},\nonumber\\
    g^{-2}(t)\, \big( \mathcal{A}^{(3)}(\mu) \, g^2\big)(t) & ~\le~ \frac{3^{3/4}}{2^2\, \sqrt[4]{b}},\nonumber\\
   g^{-2}(t)\, \big( \mathcal{A}^{(4)}(\mu) \, g^2\big) (t) & ~\le~ \frac{\mu^4}{2^4 \, 3^{5/4}} \, b^{-3/4}.
\label{eq:gtotheminus2estimates}
\end{align}
Using \eqref{est:NormA1}, we can employ these inequalities (whose right hand sides are inversely proportional to powers of $b$) to estimate the norm of $A$. According to \eqref{est:NormA1}, we have, first on $\mathscr{D}$ and by linear extension also on the whole of $\Banach_g$: 
\be
	\|A\| ~\leq~ \|K\| \, \sup_{t>0}\, g^{-2}(t) \big((\id + 8\mathcal{A}(\mu)) \, g^2 \big)(t) .
\ee
Now we use \eqref{eq:gtotheminus2estimates} for the various contributions $A^{(k)}(\mu) $ to $\mathcal{A}(\mu) = \sum_{k=1}^4 A^{(k)}(\mu)$, finding:
\begin{align}
	\|A\| & ~\le~ \|K\| + \frac{2^{3.5} \|K\|}{b^{1/2}} + \frac{\mu^4 \|K\|}{18 b} + \frac{2\cdot 3^{3/4} \|K\|}{\sqrt[4]{b}} + \frac{\mu^4 \|K\|}{2 (3^5 b^3)^{1/4}}\nonumber\\
&\overset{b\ge 1}{\le}~ \|K\| + \frac{\|K\|}{\sqrt[4]{b}} \left(2^{3.5}+ \mu^4/18 + 2 \cdot 3^{3/4} + \mu^4/(2 \cdot 3^{5/4})\right)\\
&<~ \|K\| + \frac{\|K\|}{\sqrt[4]{b}} (16 + \mu^4 ).
\end{align}
Recalling that $b=\frac{\|K\|^4}{(1-\|K\|)^4} (16+\mu^4)^4$ (see \eqref{eq:defb}), we finally obtain that:
\be
	\|A\| ~<~ \|K\| + \frac{\|K\|}{b^{1/4}} (16 + \mu^4 ) ~= ~ \|K\| + 1-\|K\| =1.
\ee
We have thus shown that $A$ defines (by linear extension) a contraction on $\Banach_g$. Thus, the Neumann series $\psi = \sum_{k=0}^\infty A^k \psi^\free$ yields the unique (global-in-time) solution of the equation \(\psi=\psi^\free+A\psi\). \qed

\section{Conclusion and outlook} \label{sec:discussion}

Extending previous work for Klein-Gordon particles \cite{mtve,int_eq_curved} to the Dirac case, we have established the existence of dynamics for a class of integral equations which express direct interactions with time delay at the quantum level. To obtain this result, we have assumed a cutoff of the spacetime before $t=0$. It has been demonstrated that the Big Bang singularity can naturally provide such a cutoff. Remarkably, this yields a class of rigorous interacting models in 1+3 spacetime dimensions.

Compared to the previous works \cite{mtve,int_eq_curved}, our techniques have been modified and improved. Instead of demonstrating explicitly the convergence of the Neumann series by iterating the estimate \eqref{eq:spatialnormapsi} -- which is lengthy -- we have here succeeded in directly showing that $A$ is a contraction on the weighted space $\Banach_g$ for a suitable $g$. This also has the advantage that no arbitrary final time $T$ as in \cite{mtve,int_eq_curved} had to be introduced which could only later be taken to infinity by an additional argument (involving a change of Banach space).

The main challenge in our work has been the non-Markovian nature of the dynamics. This has made it necessary to directly prove global existence in time instead of concatenating short-time solutions on small time intervals (which would have been easier to obtain). Apart from this, the distributional derivatives in the Green's functions of the Dirac equation have made the analysis substantially more difficult than in the Klein-Gordon case. Compared to the latter, we have also treated the massive case (which was not considered in \cite{mtve} for 1+3 dimensions).

Our results are furthermore characterized as follows. We have shown that the wave function is determined by Cauchy data at the initial time (corresponding to the Big Bang singularity); however, no Cauchy problem is available at different times. The main requirement of our theorems is a smallness condition on the interaction kernel $K$, demanding that both $K$ and certain first and second order derivatives of $K$ must be bounded and not too large. This still admits a wide class of interaction kernels, and we emphasize that in no way the interaction needs to be small compared to the size of the domain of the wave function. The latter is a common requirement for Fredholm integral equations but it would make the result worthless for infinite spatio-temporal domains.

Besides, we have assumed that $K$ is complex-valued while it could be matrix-valued in the most general case. The reason for this assumption is that our proof requires the integral operator $A$ to be a map from a certain Sobolev space onto itself in which weak derivatives with respect to the Dirac operators of the two particles can be taken. If $K$ were matrix-valued, it would not commute with these Dirac operators in general. Then $A \psi$ would contain new types of weak derivatives which cannot be taken in the initial Sobolev space. As illustrated in Sec.\ \ref{sec:choiceofB}, this creates a situation where more and more derivatives have to be controlled, possibly up to infinite order where the success of an iteration scheme seems unlikely. At present, we do not know how to deal with this issue. 
Improving on this point, however, defines an important task for future research, as e.g.\ electromagnetic interactions involve interaction kernels proportional to $\gamma_1^\mu \gamma_{2\mu}$ (see \cite{direct_interaction_quantum}).

In addition, it would be desirable to generalize our work in the following regards.
\begin{itemize}
	\item \textit{$N$ particle integral equations.} Our hope is that our work could contribute to the formulation of a rigorous relativistic many-body theory that can be applied for finite times, not only for scattering processes. An important step in this direction is to treat an arbitrary fixed number $N \in \N$ of particles (setting aside particle creation and annihilation).  A class of possible $N$-particle integral equations has been suggested in \cite{direct_interaction_quantum}. It has the schematic form
	\be
		\psi(x_1,...,x_N) = \psi^\free + \sum_{i < j} \int d^4 x_i' \, d^4 x_j' S_i(x_i-x_i')\, S_j(x_j-x_j')\,K_{ij}(x_i',x_j')\, \psi(...x_i', ...,x_j',...).
	\ee
  It might well be possible to prove the existence and uniqueness of solutions for that equation using the methods developed in the present paper.
	\item \textit{Singular interaction kernels.} The physically most natural interaction kernel is given by a delta function along the light cone, $K(x_1,x_2) \propto \delta((x_1-x_2)_\mu (x_1-x_2)^\mu)$. Getting closer to this case is one of our central goals. Apart from approaching the problem head-on by suitably interpreting the distributional expressions and trying to prove the existence of solutions of the resulting singular integral equation, which seems difficult, one could also try to make smaller steps first. For example, one could decompose $\delta((x_1-x_2)_\mu (x_1-x_2)^\mu))$ into $\frac{1}{2 |\vx_1-\vx_2|} [ \delta(x_1^0-x_2^0 - |\vx_1-\vx_2|) + \delta(x_1^0-x_2^0 + |\vx_1-\vx_2|) ]$ and only then replace the delta functions with a peaked but smooth function, keeping the singular factor $1/|\vx_1-\vx_2|$. This has been done in \cite{mtve} for the Klein-Gordon case. In the Dirac case, the distributional derivatives make a generalization of that result difficult, and we have not attempted it here. However, it is conceivable that a suitable modification of our techniques could make it possible to treat this case.\\
Another interesting question is whether the smallness condition on $K$ can be alleviated such that arbitrarily peaked functions are admitted. This could allow taking a limit where $K$ approaches the delta function along the light cone.
\end{itemize}

\noindent \textbf{Acknowledgments.}\\[1mm]
We would like to thank Volker Bach, Dirk Deckert, Stefan Teufel and Roderich Tumulka for helpful discussions. 
M. N. acknowledges funding from Cusanuswerk and from the Elite Network of Bavaria, through the Junior Research Group `Interaction Between Light and Matter'. We thank an anonymous referee for particularly helpful suggestions.
\\[1mm]
\begin{minipage}{15mm}
\includegraphics[width=13mm]{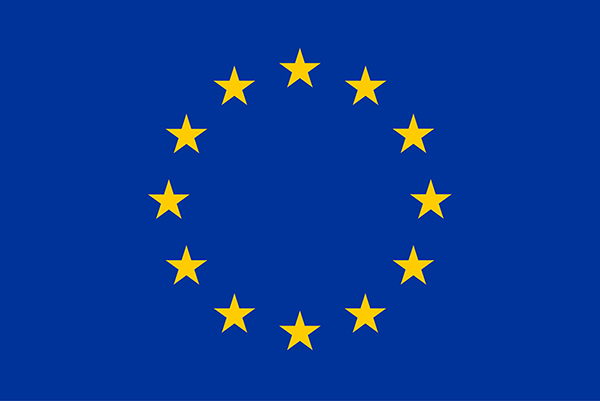}
\end{minipage}
\begin{minipage}{143mm}
This project has received funding from the European Union's Framework for
Re- \\ search and Innovation Horizon 2020 (2014--2020) under the Marie Sk{\l}odowska-
\end{minipage}\\[1mm]
Curie Grant Agreement No.~705295.


\end{document}